       \providecommand\upi{\pi}%
    \providecommand\upi{\pi}%
       \let\le=\leqslant  \let\leq=\leqslant
\providecommand\boldsymbol[1]{\mbox{\boldmath $##1$}}}
\newsavebox{\astrutbox}
\sbox{\astrutbox}{\rule[-5pt]{0pt}{20pt}}
\newtheorem{theorem}{Theorem}
\title[  On the inverse problem of 1-D collisionless equilibria] {From one-dimensional fields to Vlasov equilibria: Theory and application of Hermite polynomials }
\author[O. Allanson, T. Neukirch, S. Troscheit and F. Wilson]%
{O.\ns A\ls L\ls L\ls A\ls N\ls S\ls O\ls N    %
  \thanks{oliver.allanson@st-andrews.ac.uk},\ns
T.\ns N\ls E\ls U\ls K\ls I\ls R\ls C\ls H  \thanks{ tn3@st-andrews.ac.uk},\ns S.\ns T\ls R\ls O\ls S\ls C\ls H\ls E\ls I\ls T  \thanks{s.troscheit@st-andrews.ac.uk} \break
\and F.\ns W\ls I\ls L\ls S\ls O\ls N \thanks{fw237@st-andrews.ac.uk}   }
\affiliation{School of Mathematics \& Statistics, University of St Andrews,
Fife, KY16 9SS, UK}
\date{?; revised ?; accepted ?. }
\begin{document}

\maketitle

%\author{O. Allanson, T. Neukirch, S. Troscheit and F. Wilson}
%\address{School of Mathematics and Statistics, University of St Andrews, North Haugh, St Andrews, KY16 9SS, UK}
%\email{oliver.allanson@st-andrews.ac.uk, tn3@st-andrews.ac.uk, s.troscheit@st-andrews.ac.uk, fw237@st-andrews.ac.uk}

\begin{abstract}
We consider the theory and application of a solution method for the inverse problem in collisionless equilibria, namely that of calculating a Vlasov-Maxwell equilibrium for a given macroscopic (fluid) equilibrium. Using Jeans' Theorem, the equilibrium distribution functions are expressed as functions of the constants of motion, in the form of a Maxwellian multiplied by an unknown function of the canonical momenta. In this case it is possible to reduce the inverse problem to inverting Weierstrass transforms, which we achieve by using expansions over Hermite polynomials. A sufficient condition on the pressure tensor is found which guarantees the convergence and the boundedness of the candidate solution, when satisfied. This condition is obtained by elementary means, and it is clear how to put it into practice. We also argue that for a given pressure tensor for which our method applies, there always exists a positive distribution function solution for a sufficiently magnetised plasma. Illustrative examples of the use of this method with both force-free and non-force-free macroscopic equilibria are presented, including the full verification of a recently derived distribution function for the force-free Harris Sheet (\citet{Allanson-2015PoP}). In the effort to model equilibria with lower values of the plasma beta, solutions for the same macroscopic equilibrium in a new gauge are calculated, with numerical results presented for $\beta_{pl}=0.05$. % We also consider the problem of including a spatially inhomogeneous temperature profile. 
\end{abstract}

\begin{PACS}

\end{PACS}

\section{Introduction}
An important question in the study of plasmas is to understand the fundamental physics involved in magnetic reconnection. Magnetic reconnection processes can critically depend on a variety of length and time scales, for example on lengths of the order of the Larmor orbits and below that of the mean free path (\citet{Biskamp-2000, Birn-2007}). In such situations a collisionless kinetic theory could be necessary to capture all of the relevant physics, and as such an understanding of the differences between using MHD, two-fluid, hybrid, Vlasov and other approaches is of paramount importance, for example see \citet{Birn-2001, Birn-2005} for discussions of this problem in the context of one-dimensional (1-D) current sheets: the `GEM' and `Newton' challenges.

Current sheet equilibria are frequently considered to be the initial state of wave processes, instabilities, reconnection and various dynamical phenomena in laboratory, space and astrophysical plasmas, in theory and observation; see for example \citet{Fruit-2002, Schindlerbook, Yamada-2010}. In particular, force-free current sheets are relevant for the solar corona (\citet{Priest-2000}), Jupiter's magnetotail (\citet{Artemyev-2014}), the Earth's magnetotail (\citet{Vasko-2014,Petrukovich-2015}) and the Earth's magnetopause (\citet{Panov-2011}). Further relevant theoretical works on distribution functions (DFs) for (nonlinear) force-free current sheets are, for example, \citet{Harrison-2009a,Harrison-2009b,Neukirch-2009,Wilson-2011,Abraham-Shrauner-2013,Allanson-2015PoP,Kolotkov-2015}.

In the absence of an exact collisionless kinetic equilibrium solution, one has to use non-equilibrium DFs to start kinetic simulations, without knowing how far from the true equilibrium DF they are. In such cases, non-equilibrium `flow-shifted' Maxwellian distributions are frequently used (see \citet{Hesse-2005,Guo-2014} for examples). Using the DF found in \citet{Harrison-2009b}, the first fully kinetic simulations of collisionless reconnection with an initial condition that is an exact Vlasov solution for a nonlinear force-free field was conducted by \citet{Wilson-2016}.

Motivated by these and other considerations, this paper presents results on the theory and application of a method that allows the calculation of collisionless kinetic plasma equilibria. The method is specifically designed to solve the problem of finding quasineutral collisionless equilibrium DFs, $f_{s}$, for a given macroscopic plasma equilibrium. 

As intimated above, 1-D Cartesian coordinates are very frequently used in the study of waves, instabilities and reconnection (see \citet{Schindlerbook} for example). In this work, $z$ is taken to be the spatial coordinate on which the system depends. Thus the Hamiltonian, $H_s$, and two of the canonical momenta $p_{xs}$ and $p_{ys}$
\begin{eqnarray}
H_s&=&m_{s}\boldsymbol{v}^2/2+q_{s}\phi,\nonumber\\
p_{xs}&=&m_{s}v_{x}+q_{s}A_x,\nonumber\\
p_{ys}&=&m_{s}v_{y}+q_{s}A_y,\nonumber
\end{eqnarray}
are conserved. The particle species is denoted by $s$, with $q_{s}$ the charge, $\boldsymbol{v}$ the velocity and $\phi$ the scalar potential. The vector potential is taken to be $\boldsymbol{A}=(A_x(z),A_y(z),0)$, such that $\boldsymbol{B}=\nabla\times\boldsymbol{A}$. The macroscopic force balance is then given by
\begin{equation}
\frac{d}{dz}\mathsfbi{P}_{zz}=(\boldsymbol{j}\times\boldsymbol{B})_z, \label{eq:motion}
\end{equation}
see e.g. \citet{Mynick-1979a,Harrison-2009a}, with $\boldsymbol{j}=(\nabla\times\boldsymbol{B})/\mu_0$ the current density, $\mu_0$ the magnetic permeability \emph{in vacuo} and $\mathsfbi{P}_{ij}$ the $ij$ component of the pressure tensor
\begin{equation}
\mathsfbi{P}_{ij}=\sum_{s}P_{ij,s}=\sum_{s}m_{s}\int\,w_{is}\,w_{js}\,f_{s}\,d\boldsymbol{v}.
\end{equation}
The particle velocity relative to the bulk is given by $w_{i}=v_{i}-\langle v _i\rangle _s$, for $\langle v_i\rangle _s$ the $i$ component of the bulk velocity of particle species $s$. 

A collisionless equilibrium DF is a solution of the steady-state Vlasov equation. A method frequently used to solve Vlasov's equation is to write $f_{s}$ as a function of a subset of the constants of motion (Jeans' Theorem) (see \citet{Schindlerbook} for example). This paper considers collisionless plasmas described by DFs of the form
\begin{equation}
f_{s}=\frac{n_0}{(\sqrt{2\upi}v_{th,s})^3}\; {\rm e}^{-\beta_sH_s}g_{s}(p_{xs},p_{ys};v_{th,s}),\label{eq:F_form}
\end{equation} 
with $g_{s}$ the unknown deviation from a Maxwellian distribution, parameterised by the thermal velocity $v_{th,s}$ of particle species $s$. This form is chosen for the DF for practical mathematical reasons (integrability) and to be readily compared to the Maxwellian distribution function when $g_{s}=1$. Note that for DFs of the form in equation (\ref{eq:F_form}), $\langle v_z\rangle_s=0$, since $f_{s}$ is an even function of $v_z$. The species-dependent parameter $\beta_{s}=1/(k_BT_s)$ is the thermal beta, with $n_{0}$ a normalisation parameter that does not necessarily represent the number density. The combination of quasineutrality ($N_{i}(A_{x},A_{y},\phi)=N_{e}(A_{x},A_{y},\phi)$) and a DF of the form in equation (\ref{eq:F_form}) results in a scalar potential that is implicitly defined as a function of the vector potential, e.g. \citet{Harrison-2009a,Schindlerbook,Tasso-2014,Kolotkov-2015}:
\begin{equation}
\phi_{qn}(A_x,A_y)=\frac{1}{e(\beta_e+\beta_i)}\ln (N_i/N_e),\label{eq:quasineutral}
\end{equation} 
where $N_i(A_x,A_y)$ and $N_e(A_x,A_y)$ are the number densities of the ions and electrons respectively, and $e$ is the elementary charge. In this work, parameters are chosen such that $N_i=N_e$ as functions over $(A_x,A_y)$ space, and so `strict neutrality' is satisfied, implying $\phi_{qn}=0$. It has been shown in \citet{Channell-1976} that this form of DF, together with strict neutrality, implies that the relevant component of the pressure tensor, $\mathsfbi{P}_{zz}$, is a 2-D integral transform of the unknown function $g_{s}$, given by
\begin{eqnarray}
&&\mathsfbi{P}_{zz}(A_x,A_y)=\frac{\beta_{e}+\beta_{i}}{\beta_{e}\beta_{i}}\frac{n_0}{2\upi m_{s}^2v_{th,s}^2}\nonumber\\
&&\times\int_{-\infty}^\infty\int_{-\infty}^\infty \; {\rm e}^{-\beta_{s}\left((p_{xs}-q_{s}A_x)^2+(p_{ys}-q_{s}A_y)^2\right)/(2m_{s})}g_{s}(p_{xs},p_{ys};v_{th,s})dp_{xs}dp_{ys}. \label{eq:Channell}
\end{eqnarray}
This equation defines the inverse problem at hand, viz. `for a given macroscopic equilibrium characterised by $\mathsfbi{P}_{zz}(A_x,A_y)$, can we invert the transform to solve for the unknown function $g_{s}$?' Note that the current densities 
\begin{eqnarray}
j_x(A_x,A_y)=\sum_{s}q_{s}n_s\langle v_x\rangle_s=\sum_{s}\,q_{s}\,\int v_{x} \,f_{s} \, d^3v,\nonumber\\
j_y(A_x,A_y)=\sum_{s}q_{s}n_s\langle v_y\rangle_s=\sum_{s}\,q_{s}\,\int v_{y} \, f_{s} \, d^3v,\nonumber
\end{eqnarray}
are themselves related to the pressure according to 
\begin{equation}
\boldsymbol{j}(A_x,A_y)=\frac{\partial \mathsfbi{P}_{zz}}{\partial \boldsymbol{A}},
\end{equation} 
see \citet{Grad-1961, Mynick-1979a,Schindlerbook,Harrison-2009a} for example. 

The above equation demonstrates that to reproduce a specific magnetic field, the $\mathsfbi{P}_{zz}$ function must be compatible. For example, in the case of a \emph{force-free field}, there is a simple procedure one can follow to calculate an expression for $\mathsfbi{P}_{zz}(A_x,A_y)$ (for details see Section 3).

In \citet{Abraham-Shrauner-1968}, Hermite polynomials are used to solve the Vlasov-Maxwell (VM) system for the case of `stationary waves' in a manner like that to be described in this paper. These correspond not to Vlasov equilibria, but rather to nonlinear waves that are stationary in the wave frame.

In \citet{Channell-1976}, two methods are presented for the solution of the inverse problem with neutral VM equilibria. These two methods are inversion by Fourier transforms and -- once again -- expansion over Hermite polynomials respectively. First impressions suggest that Fourier transforms do seem ideally suited to the task, since the right-hand side of equation (\ref{eq:Channell}) allows the convolution theorem to be applied. The Fourier transform method is used in \citet{Channell-1976} and \citet{Harrison-2009a} for example. However, when either the Fourier or inverse Fourier transform cannot be calculated, this method clearly fails to be of use. 

The method presented in this paper should be seen as a rigorous extension/generalisation of the Hermite Polynomial method used by Abraham-Shrauner and Channell. As such it is complementary to the Fourier transform method.  

The structure of this paper is as follows. Section 2 contains the mathematical details of the solution of the inverse problem defined in the Introduction. First, a formal solution is derived in Subsection \ref{sec:formal}, by using known methods of inverting Weierstrass transforms with possibly infinite series of Hermite polynomials. For the formal solution to meaningfully describe a DF however, these series must be convergent, positive and bounded. A sufficient condition for convergence that places a restriction on the pressure tensor is obtained in Subsection \ref{sec:convergence}. In Subsection \ref{sec:non-neg} we argue that for an appropriate pressure function, there always exists a positive DF, for a sufficiently magnetised plasma. We include some technical calculations in Appendix B that support the positivity argument, including proofs for a certain class of function.

In Section 3. we present non-trivial examples to demonstrate the application of the inversion method to  a recently derived force-free DF, \citet{Allanson-2015PoP}, as well as to DFs that correspond to the same magnetic field, but in a different gauge. This work is motivated by numerical reasons, and should allow easier calculation and visualisation of the DFs. In Appendix A we present the full details of the calculations that verify that these DFs satisfy the convergence criteria derived in Subsection \ref{sec:convergence}, and that as a result the DFs are bounded. In Section \ref{Sec:Channell} we consider the use of the method for a non-force-free magnetic field, considered by \citet{Channell-1976} using Fourier transforms. This calculation is included to demonstrate the relationship between the Fourier transform and Hermite Polynomial inversion methods.

\section{Solution of the inverse problem}
To make mathematical progress, we make the assumption of either `summative' or `multiplicative' separability, i.e. that $\mathsfbi{P}_{zz}(A_x,A_y)$ is of the form
\begin{equation}
\mathsfbi{P}_{zz}=\frac{n_0(\beta_e+\beta_i)}{\beta_e\beta_i}\left(\tilde{P}_1(A_x)+\tilde{P}_2(A_y)\right)\;{\rm or}\;\mathsfbi{P}_{zz}=\frac{n_0(\beta_e+\beta_i)}{\beta_e
\beta_i}\tilde{P}_1(A_x)\tilde{P}_2(A_y).\label{eq:pform}
\end{equation}
The components of the pressure, $\tilde{P}_1(A_x)$ and $\tilde{P}_2(A_y)$, are dimensionless. These assumptions are commensurate with 
\begin{equation}
g_{s}=g_{1s}(p_{xs};v_{th,s})+g_{2s}(p_{ys};v_{th,s})\;{\rm or}\;g_{s}=g_{1s}(p_{xs};v_{th,s})g_{2s}(p_{ys};v_{th,s})\label{eq:gsep},
\end{equation}
respectively, and allow separation of variables according to 
\begin{eqnarray}
\tilde{P}_1(A_x)=\frac{1}{\sqrt{2\upi} m_{s}v_{th,s}}\int_{-\infty}^{\infty}\; {\rm e}^{-\beta_{s}\left(p_{xs}-q_{s}A_x\right)^2/(2m_{s})}g_{1s}(p_{xs};v_{th,s})dp_{xs},\label{eq:p1tog1}\\ 
\tilde{P}_2(A_y)=\frac{1}{\sqrt{2\upi} m_{s}v_{th,s}}\int_{-\infty}^{\infty}\; {\rm e}^{-\beta_{s}\left(p_{ys}-q_{s}A_y\right)^2/(2m_{s})}g_{2s}(p_{ys};v_{th,s})dp_{ys}.\label{eq:p2tog2}
\end{eqnarray}
The separation constant is set to unity in the case of multiplicative separability, and zero in the case of additive separability, without loss of generality. The components of the pressure are now represented by 1-D integral transforms of the unknown parts of the DF. 

\subsection{Weierstrass transform}\label{sec:Weierstrass}
The Weierstrass transform, $\Phi(x)$ of $\phi(y)$, is defined by
\begin{equation}
\Phi(x)=\mathcal{W}\left[\phi\right]:x=\frac{1}{\sqrt{4\upi}}\int_{-\infty}^\infty \, {\rm e}^{-(x-y)^2/4}\,\phi(y)\,dy,\label{eq:Weier}
\end{equation}
see \citet{Bilodeau-1962} for example. This is also known as the Gauss transform, Gauss-Weiertrass transform and the Hille transform \citep{Widder-1951}. As the Green's function solution to the heat/diffusion equation, $\Phi(x)$ represents the temperature/density profile of an infinite rod one second after it was $\phi(x)$, see \citet{Widder-1951}, implying that the Weierstrass transform of a positive function is itself a positive function.
$\tilde{P}_1$ and $\tilde{P}_2$ are expressed as Weierstrass transforms of $g_{1s}$ and $g_{2s}$ in equations (\ref{eq:p1tog1}) and (\ref{eq:p2tog2}) respectively, give or take some constant factors. Formally, the operator for the inverse transform is $\; {\rm e}^{-D^2}$, with D the differential operator and the exponential suitably interpreted, see \citet{Eddington-1913, Widder-1954} for two different interpretations of this operator. We should mention that one of the existing nonlinear force-free VM equilibria known \citep{Harrison-2009b} is based on an eigenfunction of the Weierstrass transform \citep{Wolf-1977}. 

Perhaps a more computationally `practical' method employs Hermite polynomials, see \citet{Bilodeau-1962}.
The Weierstrass transform of the $n^{th}$ Hermite polynomial $H_n(y/2)$ is $x^n$. Hence if one knows the coefficients of the Maclaurin expansion of $\Phi(x)$ in equation (\ref{eq:Weier}), 
\[
\Phi(x)=\sum_{j=0}^\infty\eta_jx^j,
\]
then the Weierstrass transform can immediately be inverted to obtain the formal expansion
\begin{equation}
\phi(y)=\sum_{j=0}^\infty\eta_jH_j\left(y/2\right).\label{eq:inverseweier}
\end{equation}
For this method to be useful in our problem, the pressure function must have a Maclaurin expansion that is convergent over all $(A_x,A_y)$ space. Then, its coefficients of expansion must `allow' the Hermite series to converge. Questions regarding the positivity and convergence of formal solutions represented by infinite series of Hermite polynomials were raised by \citet{Abraham-Shrauner-1968} and \citet{Hewett-1976} respectively, and the same questions arise in the context of the problems in this paper. For some other examples of applications of Hermite polynomials to collisionless and weakly collisional plasmas, see \citet{Camporeale-2006, Suzuki-2008, Zocco-2015, Schekochihin-2016}. We also remark that the use of Hermite polynomials in kinetic theory dates back, at least, to \citet{Grad-1949a,Grad-1949b} in the study of rarefied collisional gases. 
\subsection{Formal solution}\label{sec:formal}
The following discussion applies to pressure functions of both summative and multiplicative form, with Maclaurin expansion representations (convergent over all $(A_x,A_y)$ space) given by
\begin{equation}
\tilde{P}_1(A_x)=\sum_{m=0}^\infty a_m\left(\frac{A_x}{B_0L}\right)^m,\;\;\tilde{P}_2(A_y)=\sum_{n=0}^\infty b_n\left(\frac{A_y}{B_0L}\right)^n,\label{eq:Pmacro}
\end{equation}
with $B_0$ and $L$ the characteristic magnetic field strength and spatial scale respectively. In line with the discussion on inversion of the Weierstrass transform in Subsection \ref{sec:Weierstrass}, we solve for $g_{s}$ functions represented by the following expansions
\begin{eqnarray}
g_{1s}(p_{xs};v_{th,s})&=&\displaystyle\sum_{m=0}^\infty C_{ms}H_m\left(\frac{p_{xs}}{\sqrt{2}m_{s}v_{th,s}}\right),\label{eq:gform1}\\
g_{2s}(p_{ys};v_{th,s})&=&\displaystyle\sum_{n=0}^\infty D_{ns}H_n\left(\frac{p_{ys}}{\sqrt{2}m_{s}v_{th,s}}\right),\label{eq:gform2}
\end{eqnarray}
with currently unknown species-dependent coefficients $C_{ms}$ and $D_{ns}$. We cannot simply `read off' the coefficients of expansion as in (\ref{eq:inverseweier}), since our integral equations are not quite in the `perfect form' of (\ref{eq:Weier}). Upon computing the integrals of equations (\ref{eq:p1tog1}) and (\ref{eq:p2tog2}) with the above expansions for $g_{s}$, we have
\begin{equation}
\tilde{P}_1(A_x)=\displaystyle\sum_{m=0}^\infty \left(\frac{\sqrt{2}q_{s}}{m_{s}v_{th,s}}\right)^{m}C_{ms}\,A_x^{m},\;\;\tilde{P}_2(A_y)=\displaystyle\sum_{n=0}^\infty \left(\frac{\sqrt{2}q_{s}}{m_{s}v_{th,s}}\right)^{n}D_{ns}\,A_y^n.\label{eq:Pmicro}
\end{equation}
This result appears species dependent. However, to ensure neutrality ($N_i(A_x,A_y)=N_e(A_x,A_y)$) - as in \citet{Channell-1976, Harrison-2009b, Wilson-2011} - we have to fix the pressure function to be species independent. It clearly must also match with the pressure function that maintains equilibrium with the prescribed magnetic field. The conditions derived here are critical for making a link between the macroscopic description of the equilibrium structure  with the microscopic one of particles. These requirements imply by the matching of equations (\ref{eq:Pmacro}) and (\ref{eq:Pmicro}) that
\begin{eqnarray}
\left(\frac{\sqrt{2}q_{s}}{m_{s}v_{th,s}}\right)^{m}C_{ms} & = & \left(\frac{1}{B_0L}\right)^{m}a_{m}\implies C_{ms}={\rm sgn}(q_{s})^m\left(\frac{\delta_s}{\sqrt{2}}\right)^{m}a_{m}, \label{eq:ccoeff}\\
\left(\frac{\sqrt{2}q_{s}}{m_{s}v_{th,s}}\right)^{n}D_{ns} & =  &\left(\frac{1}{B_0L}\right)^nb_n\implies D_{ns}={\rm sgn}(q_{s})^n\left(\frac{\delta_s}{\sqrt{2}}\right)^{n}b_{n}\label{eq:dcoeff},
\end{eqnarray}
with ${\rm sgn}(q_e)=-1$ and ${\rm sgn}(q_i)=1$. The species-dependent magnetisation parameter, $\delta_s$, see \citet{Fitzpatrickbook} for example, is defined by 
\[
\delta_s=\frac{m_{s}v_{th,s}}{eB_0L}.
\]
It is the ratio of the thermal Larmor radius, $\rho_s=v_{th,s}/|\Omega_s|$, to the characteristic length scale of the system, $L$. The gyrofrequency of particle species $s$ is $\Omega_s=q_{s}B_0/m_{s}$. The magnetisation parameter is also known as the fundamental ordering parameter in gyrokinetic theory (see \citet{Howes-2006} for example). (In particle orbit theory, $\delta_s\ll 1$ implies that a guiding centre approximation will be applicable for that species, see \citet{Northrop-1961}.)

\subsection{Convergence of the distribution function}\label{sec:convergence}
Here we find a sufficient condition that, when satisfied, guarantees that the Hermite series representations in (\ref{eq:gform1}) and (\ref{eq:gform2}) converge. This provides some answers to questions on the convergence of Hermite Polynomial representations of Vlasov equilibria dating back to \citet{Hewett-1976}. 
%Note for the following discussion, that when we write \[\mathcal{J}_n\sim o\left(\mathcal{K}_n\right),\] we mean that \[ \mathcal{J}_n/\mathcal{K}_n\to 0\;{\rm as}\;n\to\infty.\]
\begin{theorem}\label{thm:HermiteConvergence}
Consider a Maclaurin expansion of the form
\begin{equation}
\tilde{P}_j(A)=\sum_{m=0}^\infty a_m\left(\frac{A}{B_0L}\right)^m
\end{equation}
that is convergent for all $A$. Then for $\varepsilon_s=m_{s}^2v_{th,s}^2/2$ the function $g_{js}$, calculated in the inverse problem defined by the association
\begin{equation}
\tilde{P}_{j}(A):=\tilde{P}_{INT}(A)=\frac{1}{\sqrt{4\upi\varepsilon_s}}\int_{-\infty}^{\infty} \; {\rm e}^{-(p_{s}-q_{s}A)^2/(4\varepsilon_s)}g_{js}(p_{s};v_{th,s})dp_{s}.\label{eq:sample}
\end{equation}
of the form
\begin{equation}
g_{js}(p_s;v_{th,s})=\sum_{m=0}^\infty a_m\,{\rm sgn}(q_{s})^m\left(\frac{\delta_s}{\sqrt{2}}\right)^mH_{m}\left(\frac{p_{s}}{\sqrt{2}m_{s}v_{th,s}}\right)\label{eq:gj}
\end{equation}
converges for all $p_{s}$, provided 
\begin{equation}
\lim_{m\to\infty}\sqrt{m}\left|\frac{a_{m+1}}{a_m}\right|<1/\delta_s,
\end{equation}
in the case of a series composed of both even- and odd-order terms, or
\begin{equation}
\lim_{m\to\infty}\,m\,\left|\frac{a_{2m+2}}{a_{2m}}\right|<1/(2\delta_s^2),\hspace{3mm}\lim_{m\to\infty}\,m\,\left|\frac{a_{2m+3}}{a_{2m+1}}\right|<1/(2\delta_s^2),
\end{equation}
in the case of a series composed only of even-, or odd-order terms, respectively.
\end{theorem}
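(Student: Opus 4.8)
The plan is to dominate the Hermite series (\ref{eq:gj}) term-by-term by a positive series whose convergence is governed by an elementary ratio test, and then to read off the stated conditions. The single external ingredient I would invoke is a sharp pointwise growth bound for the (physicists') Hermite polynomials, namely Cram\'er's inequality $|H_m(x)|\le K\,2^{m/2}\sqrt{m!}\,{\rm e}^{x^2/2}$, valid for all real $x$ with an absolute constant $K$ (its precise value is immaterial; any bound of the shape $|H_m(x)|\le C(x)\,2^{m/2}\sqrt{m!}$ with $C$ finite for each fixed $x$ suffices). The key structural observation is that the scaling $2^{m/2}\sqrt{m!}$ is exactly matched by the prefactor $(\delta_s/\sqrt2)^m$ appearing in (\ref{eq:gj}).

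First I would fix $p_s$, write $\xi=p_s/(\sqrt2\,m_{s}v_{th,s})$ for the argument of the Hermite polynomial, and note that $|{\rm sgn}(q_{s})^m|=1$. Applying Cram\'er's inequality, the modulus of the $m$-th term of (\ref{eq:gj}) is at most $K\,{\rm e}^{\xi^2/2}\,|a_m|\,\delta_s^m\sqrt{m!}$, since $(\delta_s/\sqrt2)^m\,2^{m/2}=\delta_s^m$. Hence absolute (indeed locally uniform, as ${\rm e}^{\xi^2/2}$ is bounded on compact $p_s$-sets) convergence of (\ref{eq:gj}) follows once the positive majorant series $\sum_m|a_m|\,\delta_s^m\sqrt{m!}$ converges.

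Second, I would apply the ratio test to this majorant. For a series carrying both even and odd orders the ratio of successive terms is $\delta_s\,|a_{m+1}/a_m|\,\sqrt{m+1}$; using $\sqrt{m+1}\sim\sqrt m$ its limit is $\delta_s\lim_{m\to\infty}\sqrt m\,|a_{m+1}/a_m|$, and demanding this be $<1$ reproduces the first stated condition. For a purely even (or purely odd) series the intervening coefficients vanish, so the naive ratio $a_{m+1}/a_m$ is undefined and one must instead apply the ratio test along the subsequence of nonzero terms: reindexing with $m\mapsto 2m$ (resp. $2m+1$) gives the successive ratio $\delta_s^2\,|a_{2m+2}/a_{2m}|\,\sqrt{(2m+2)(2m+1)}$, and since $\sqrt{(2m+2)(2m+1)}\sim 2m$ the convergence requirement $2\delta_s^2\lim_{m\to\infty} m\,|a_{2m+2}/a_{2m}|<1$ is precisely the stated even-order condition; the odd-order condition follows verbatim.

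The proof is short, so the ``hard part'' is conceptual rather than computational: one must use a Hermite bound with the sharp $2^{m/2}\sqrt{m!}$ scaling, because a cruder majorant would either fail to converge or degrade the threshold $1/\delta_s$. A secondary point requiring care is the passage to subsequences in the even/odd-only cases, where $\liminf$ and $\limsup$ of the naive term ratio need not coincide; restricting to the nonzero terms is what makes the ratio test legitimate and yields the factor-of-two change reflected in the $1/(2\delta_s^2)$ bounds. I would also remark that the hypothesis that $\tilde P_j$ converges on all of $A$-space (so that $|a_m|^{1/m}\to0$) is consistent with, but not needed for, the argument: the stated ratio conditions already force the majorant to converge.
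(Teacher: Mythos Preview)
Your proposal is correct and follows essentially the same route as the paper: bound each term via the pointwise Hermite estimate $|H_m(x)|\le K\,2^{m/2}\sqrt{m!}\,{\rm e}^{x^2/2}$, observe that $(\delta_s/\sqrt2)^m\cdot 2^{m/2}=\delta_s^m$, and apply the ratio test to the majorant $\sum_m |a_m|\delta_s^m\sqrt{m!}$ (restricting to the even or odd subsequence when appropriate). Your additional remarks on local uniform convergence and on why one must pass to subsequences in the even/odd-only cases are sound refinements, but the substance of the argument is the paper's own.
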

\begin{proof}
For a series composed of even- and odd-order terms, we have that
\begin{equation}
g_{s}(p_{s};v_{th,s})=\sum_{m=0}^{\infty}a_{m}\,{\rm sgn}(q_{s})^m\left(\frac{\delta_s}{\sqrt{2}}\right)^mH_{m}\left(\frac{p_{s}}{\sqrt{2}m_{s}v_{th,s}}\right)\label{eq:g2}.
\end{equation}
An upper bound on Hermite polynomials (see e.g.  \citet{Sansonebook}) is provided by the identity 
%There is an identity in \citep{Sansonebook} that provides an upper bound on Hermite polynomials  
\begin{equation}
| H_{j}(x)|<k\sqrt{j!}2^{j/2}\exp\left(x^2/2\right)\; \mbox{\rm s.t.\ } \; k=1.086435\, .\label{eq:hermbound}
\end{equation}
This upper bound implies
\[
a_m\,{\rm sgn}(q_{s})^m\left(\frac{\delta_s}{\sqrt{2}}\right)^mH_m\left(\frac{p_{s}}{\sqrt{2}m_{s}v_{th,s}}\right)<ka_m\delta_s ^m\sqrt{m!}\exp\left(\frac{p_{s}^2}{4m_{s}^2v_{th,s}^2}\right).
\]
Working on the level of the series composed of upper bounds, the ratio test clearly requires
\begin{eqnarray}
\lim_{m\to\infty}\Bigg|\frac{a_{m+1}}{a_m}\Bigg|\sqrt{m+1}<1/\delta_s,\nonumber\\
\implies \lim_{m\to\infty}\Bigg|\frac{a_{m+1}}{a_m}\Bigg|\sqrt{m}<1/\delta_s, \label{eq:criterion}
\end{eqnarray}
for a given $\delta_s\in (0,\infty)$. Then, the comparison/squeeze test implies that if the condition of equation (\ref{eq:criterion}) is satisfied, that since the series composed of upper bounds will converge, so must $g_{s}(p_{s})$.  An analogous argument holds for those series with only even or odd order terms, with the ratio test giving
\begin{equation}
\lim_{m\to\infty}\Bigg|\frac{a_{2m+2}}{a_{2m}}\Bigg|m<1/(2\delta_s^2), \hspace{2mm}\text{or}\hspace{2mm}\lim_{m\to\infty}\Bigg|\frac{a_{2m+3}}{a_{2m+1}}\Bigg|m<1/(2\delta_s^2), \label{eq:criterion2}
\end{equation}
respectively. By the same argument as above, the comparison test implies that if the condition of (\ref{eq:criterion2}) is satisfied, that since the series composed of upper bounds will converge, so must $g_{s}(p_{s})$.
%\qedhere
\end{proof}

\subsection{Positivity of the distribution function}\label{sec:non-neg}
In this Subsection, we consider the positivity of the Hermite series representation of $g_{s}$ -- given by equations (\ref{eq:gform1}) and (\ref{eq:gform2}) -- and hence  positivity of the DF. This provides some answers to questions on the positivity of DF representation by Hermite polynomials dating back to \citet{Abraham-Shrauner-1968}, and also raised by \citet{Hewett-1976}. 

For an example of a $g_s$ function that is not necessarily always positive despite the pressure function being positive, consider a pressure function (e.g. from \citet{Channell-1976}) that is quadratic in the vector potential. In our notation, the pressure function considered by Channell is
\[
\tilde{P}=\frac{1}{2}\left(a_0+a_2\left(\frac{A_x}{B_0L}\right)^2\right)+\frac{1}{2}\left(a_0+a_2\left(\frac{A_y}{B_0L}\right)^2\right).
\]
The resultant $g_{s}$ function is of the form
\[
g_{s}\propto \frac{1}{2}\left[a_0+a_2\left(\frac{\delta_s}{\sqrt{2}}\right)^2H_2\left(\frac{p_{xs}}{\sqrt{2}m_{s}v_{th,s}}\right)\right]+\frac{1}{2}\left[a_0+a_2\left(\frac{\delta_s}{\sqrt{2}}\right)^2H_2\left(\frac{p_{ys}}{\sqrt{2}m_{s}v_{th,s}}\right)\right].
\]
Once these Hermite polynomials are expanded, by substituting $p_{xs}=p_{ys}=0$ we see that positivity of $g_s$ is -- for given values of $a_0$ and $a_2$ -- contingent on the size of $\delta_s$, 
\[a_0-a_2\delta_{s}^{2}>0\implies \delta_s^2<\frac{a_0}{a_2}.\]
However, there is not necessarily anything `special' about the point $0$, as compared to other points in momentum-space. For example, consideration of the pressure function
\[
\tilde{P}_j=\left(a_0+a_2\left(\frac{A}{B_0L}\right)^2 +a_4\left(\frac{A}{B_0L}\right)^4       \right),
\]
gives a $g_s$ function that can, for given values of $a_0,a_2,a_4$ and for $\delta_s$ sufficiently large, be positive at $p_s=0$, and negative at some other points. 

It is worth considering how a $g_{s}$ function that is negative for some $p_{s}$ can transform in the manner of (\ref{eq:p1tog1}) and (\ref{eq:p2tog2}) to give a positive $\tilde{P}_j(A)$. One might expect that for certain values of $A$ such that the Gaussian 
\[
{\rm e}^{-(p_{s}-q_{s}A)^2/(4\varepsilon_s)}
\]
is centred on the region in $p_s$ space for which $g_s$ is negative, that a negative value of $\tilde{P}_j(A)$ could be the result.

Essentially, the Gaussian will only `successfully sample' a negative region of $g_s$ to give a negative value of $\tilde{P}_j(A)$ if the Gaussian is narrow enough -- for a given value of $\varepsilon_s$ --  to `resolve' a negative patch of $g_s$. In other words, if the Gaussian is too broad, it won't `see' the negative patches of $g_s$, and hence $\tilde{P}_j(A)$ will be positive. Hence the non-negativity of $\tilde{P}_j(A)$ is a restriction on the possible shape of $g_s$, and how that shape must scale with $\varepsilon_s$.

It is a short algebraic exercise to rewrite (\ref{eq:sample}) in the form
\begin{equation}
\sum_{n=0}^\infty a_n\left(\text{sgn}(q_s)\delta_s\tilde{A}\right)^n=\frac{1}{\sqrt{2\pi}}\int_{-\infty}^\infty e^{-(\tilde{p}_s-\tilde{A})^2/2}\bar{g}_s(\tilde{p}_s;\delta_s)d\tilde{p}_s,\label{eq:newsample}
\end{equation}
by using the following associations
\[
\tilde{A}=\frac{A}{B_0L},\hspace{3mm}\tilde{p}_s=\frac{p_s}{\sqrt{2\varepsilon_s}},\hspace{3mm}g_s(p_s;\varepsilon_s)=\bar{g}_s(\tilde{p}_s;\delta_s),
\]
and with
\begin{equation}
\bar{g}_s(\tilde{p}_s;\delta_s)=\sum_{n=0}^\infty a_n\text{sgn}(q_s)^n\left(\frac{\delta_s}{\sqrt{2}}\right)^nH_n\left(\frac{\tilde{p}_s}{\sqrt{2}}\right).\label{eq:gnorm}
\end{equation}
We shall assume that the right-hand side of (\ref{eq:gnorm}) represents a differentiable function. Note that the Gaussian in (\ref{eq:newsample}) is of fixed width $2\sqrt{2}$ (defined at $1/e$), in contrast to the Gaussian of variable width defined in (\ref{eq:sample}). 

If the Hermite series satisfies the condition in Theorem 1 then it is convergent, so (\ref{eq:hermbound}) gives
\begin{equation}
\left|\bar{g}_s(\tilde{p}_s;\delta_s)\right|<Le^{\tilde{p}_s^2/4}      
\nonumber
\end{equation}
for some finite and positive $L$, determined by the sum of the (possibly infinite) series. Note that these bounds automatically imply integrability of $f_s$ since, for some finite $L^\prime>0$, we have that $\left|\bar{g}_s(\tilde{p}_s;\delta_s)\right|<L^\prime e^{\tilde{p}_s^2/2}$  implies integrability, which is a less strict condition. This can be seen from (\ref{eq:newsample}).

The bounds on $\bar{g}_s$ given above demonstrate that $\bar{g}_s$ can not tend to infinity for finite $\tilde{p}_s$. Hence it can only reach $-\infty$ as $|\tilde{p}_s|\to\infty$.  We argue however that the positivity of the pressure prevents the possibility of $\bar{g}_s$ being without a finite lower bound. The heuristic reasoning is as follows: the expression on the right-hand side of (\ref{eq:newsample}) treats -- in the language of the heat/diffusion equation -- the $\bar{g}_{s}$ function as the initial condition for a temperature/density distribution on an infinite 1-D line, and the left-hand side represents the distribution at some finite time later on (half a second later, see \citet{Widder-1951}). Were $\bar{g}_s$ to be unbounded from below, this would imply for our problem that a smooth `temperature/density' distribution that is initially unbounded from below could, in some finite time, evolve into a distribution that has a positive and finite lower bound. This seems entirely unphysical since this would imply that an infinite negative `sink' of heat/mass would somehow be `filled in' above zero level in a finite time. %Note that one might wish to raise the example of a negative Dirac delta here as a potential counter example, since Dirac delta functions are initially unbounded and evolve in finite time to be a bounded Gaussian. However, Dirac delta functions are not differentiable and hence we can not consider these as initial conditions. One could however consider a function partly formed of an infinite sequence of negative and ever-more narrow `unit-mass' Gaussians, that approach a Dirac delta function as $|\tilde{p}_{s}|\to\infty$. This would be a smooth initial condition that is unbounded from below, however it will remain unbounded from below in finite time. 
In Appendix \ref{App:nonneg} we give some more technical mathematical arguments to support our claim that this is not possible, including proofs for a certain class of $\bar{g}_s$ functions.

If $\bar{g}_s$ (and hence $g_{s}$) is indeed bounded below then that means that one can always add a finite constant to $g_s$ to make it positive, should the lower bound be known. However this constant contribution would directly correspond to raising the pressure (through the zeroth order Maclaurin coefficient $a_0$). But if we wish to consider a pressure function that is `fixed', then we have a fixed $a_0$, and so it is not immediately obvious whether or not we can obtain a $g_s$ that is positive over all momentum space. We have already seen some examples in the discussion above for which the sign of $g_s$ depended on the value of $\delta_s$. Consider $\bar{g}_s$ evaluated at some particular value of $\tilde{p}_s$. We see from (\ref{eq:gnorm}) that positivity requires
\[
a_0+c_1\delta_s+c_2\delta_s^2+...>0,
\]
for $c_1,c_2,...$ finite constants. We also know that $a_0>0$ since $P(0)>0$, i.e. the pressure is positive. This clearly demonstrates that positivity of $g_s$ places some restriction on possible values of $\delta_s$.

Let us now suppose that for a given value of $\delta_s$, that there exists some regions in $\tilde{p}_s$ space where $\bar{g}_s<0$. Our claim that $\bar{g}_s$ has a finite lower bound, combined with the expression in (\ref{eq:gnorm}) implies that the $\bar{g}_s$ function is bounded below by a finite constant of the form $a_0+\delta_s \mathcal{M}$, with
\[
\mathcal{M}=\frac{1}{\sqrt{2}}\inf_{\tilde{p}_s}\sum_{n=1}^\infty a_n\text{sgn}(q_s)^n\left(\frac{\delta_s}{\sqrt{2}}\right)^{n-1}H_n\left(\frac{\tilde{p}_s}{\sqrt{2}}\right),
\]
and finite. By letting $\delta_s\to 0$ we see that $\bar{g}_s$ will converge uniformly to $a_0$, with
\[
\lim_{\delta_s\to 0}\bar{g}_s(\tilde{p}_s,\delta_s)=a_0>0.
\]
Hence, there must have existed some critical value of $\delta_s=\delta_c$ such that for all $\delta_s<\delta_c$ we have positivity of $\bar{g}_s$. Note that if the negative patches of $\bar{g}_s$ do not exist for any $\delta_s$, then trivially $\delta_c=\infty$ as a special case.\\

To summarise, we claim  -- provided $g_s$ is differentiable and convergent -- that for values of the magnetisation parameter $\delta_s$ less than some critical value $\delta_c$, according to $0<\delta_s<\delta_c\le \infty$, $g_s$ is positive for any positive pressure function.

\section{Examples: DFs for nonlinear force-free magnetic fields }
\subsection{Basic theory of 1-D force-free fields}
Force-free fields are those whose current density is everywhere parallel to the magnetic field, giving zero Lorentz force
\begin{equation}
\boldsymbol{j}=\alpha\boldsymbol{B}\iff \boldsymbol{j}\times\boldsymbol{B}=\boldsymbol{0} .
\end{equation}
The nature of $\alpha$ determines three distinct classes. Potential fields have $\alpha=0$, linear force-free fields have $\alpha={\rm const.}$ and nonlinear force-free fields have $\alpha=\alpha(\boldsymbol{r})$. One-dimensional force-free fields can be represented without loss of generality by 
\begin{equation}
\boldsymbol{B}=\left(B_x(z),B_y(z),0\right)=\left(-\frac{dA_y}{dz},\frac{dA_x}{dz},0\right),\;B^2={\rm const.}
\end{equation}
This leads on to a pressure balance of the form
\begin{equation}
\frac{d}{dz}\mathsfbi{P}_{zz}=0\implies \mathsfbi{P}_{zz}={\rm const.}
\end{equation}
As demonstrated in \citet{Harrison-2009b, Neukirch-2009}, the assumption of summative separability (the first option in equation (\ref{eq:pform})) determines the components of the pressure according to 
\begin{equation}
n_0\frac{\beta_e+\beta_i}{\beta_e\beta_i}\tilde{P}_1(A_x)+\frac{1}{2\mu_0}B_y^2(A_x)={\rm const.},\;\;n_0\frac{\beta_e+\beta_i}{\beta_e\beta_i}\tilde{P}_2(A_y)+\frac{1}{2\mu_0}B_x^2(A_y)={\rm const.}
\end{equation}
These expressions can now be used as the left-hand side of the integral equations (\ref{eq:p1tog1}) and (\ref{eq:p2tog2}), and one could attempt to invert the Weierstrass transforms. This method was used in \citet{Harrison-2009b} to derive a summative pressure for the `force-free Harris Sheet' (FFHS) magnetic field, and derive the corresponding DF.

As shown in \citet{Harrison-2009a},  Amp\`{e}re's law admits an infinite number of pressure functions for the same force-free equilibrium. Once a $\mathsfbi{P}_{zz}(A_x,A_y)$ with the correct properties has been found one can define another pressure function giving rise to the same current density by using the nonlinear transformation
\begin{equation}
\bar{\mathsfbi{P}}_{zz}(A_x,A_y)=\psi^\prime(P_{ff})^{-1}\psi(\mathsfbi{P}_{zz})\label{eq:Ptrans}.
\end{equation} 
Here any differentiable, non-constant function $\psi$ can be used, such that the right-hand side is positive, with $P_{ff}$ the pressure, $\mathsfbi{P}_{zz}$, evaluated at the force-free vector potential $\boldsymbol{A}_{ff}$. 

Obviously, even if the integral equation (\ref{eq:Channell}) can be solved for the original function $\mathsfbi{P}_{zz}(A_x, A_y)$ it is by no means
clear that this is possible for the transformed function $\bar{\mathsfbi{P}}_{zz}$. Usually one would expect that solving (\ref{eq:Channell}) for $g_{s}$ is much more difficult after the transformation to $\bar{\mathsfbi{P}}_{zz}$. This pressure transformation theory is important for the derivation of the low-beta DF for the nonlinear FFHS \citep{Allanson-2015PoP}. As explained therein, if the pressure transformation
\begin{equation}
\psi(\mathsfbi{P}_{zz})=\exp\left[\frac{1}{P_0}\left(\mathsfbi{P}_{zz}-P_{ff}\right)\right],\label{eq:Pfunc}
\end{equation}
is used, for $P_0$ a positive constant, it can be readily seen that $\bar{\mathsfbi{P}}_{zz}|_{\boldsymbol{A}_{ff}}=P_0$ and so free manipulation of the constant pressure is possible. This is of particular interest because it allows us to freely choose the plasma beta, $\beta_{pl}$, the ratio between
the thermal and magnetic energy densities (in our system the gas/plasma pressure and the magnetic pressure respectively) 
\[
\beta_{pl}=\frac{k_B}{(B_0^2/2\mu_0)}\sum_{s}n_sT_s=\frac{2\mu_0\mathsfbi{P}_{zz}}{B_0^2}.
\] 

\subsection{On the gauge for the vector potential}\label{Subsec:regauge}
A free choice of the plasma beta is not possible in the summative \emph{Harrison-Neukirch} equilibrium DF since that equilibrium has a lower bound of unity for the plasma beta. Note that the $\mathsfbi{P}_{zz}$ used in that work is of a `summative form' 
\[\mathsfbi{P}_{zz}=P_1(A_x)+P_2(A_y).\label{eq:oldgauge}\]
In fact it seems to be a feature generally observed that for pressure tensors (that correspond to force-free fields) constructed in this manner (\citet{Harrison-2009b, Abraham-Shrauner-2013, Wilson-2011, Kolotkov-2015}) the plasma-beta is necessarily bounded below by unity. A recent paper, \citet{Allanson-2015PoP}, used the pressure transformation techniques described above, resulting in a pressure tensor of `multiplicative form'
\[\mathsfbi{P}_{zz}=P_1(A_x)P_2(A_y),\]
to construct a DF with any $\beta_{pl}$. However, the exact form of the DF was challenging to calculate numerically for low $\beta_{pl}$, with plots for $\beta_{pl}$ only modestly below unity presented ($\beta_{pl}=0.85$). The `problem terms' are those that depend on $p_{xs}$. The specific problem is that the $A_x$ function used in previous papers is neither even nor odd as a function of $z$, 
\[A_x=2B_0L\arctan \left(\exp\left(\frac{z}{L}\right)\right),\]
and as a result the range of $p_{xs}$ for which it is necessary to numerically calculate a convergent DF can be obstructive, say over a symmetric range in velocity space. Specifically, it is challenging to attain numerical convergence for sums over Hermite polynomials when the modulus of the argument is large. When $A_x$ is neither even nor odd, then $|p_{xs}|$ can take on larger than `necessary' values for a given $v_x$.

Hence, in this paper, we shall `re-gauge' the vector potential component $A_x$ to be an odd function, 
\begin{equation}A_x=2B_0L\arctan\left(\tanh\left(\frac{z}{2L}\right)\right),\label{eq:gauge}\end{equation}
which is commensurate with $B_y$ being an even function and results in the same $B_y=B_0\,{\rm sech}(z/L)$ as the one derived from the $A_x$ defined in (\ref{eq:oldgauge}). As a consequence the numerical calculation of the DFs that we shall calculate for the FFHS become easier in the low $\beta_{pl}$ regime. 

The structure of this section is as follows. In Subsection \ref{subsec:ANWT} we include the particulars of the recently derived FFHS equilibrium, in the original gauge, for completeness. In Subsection \ref{subsubsec:ANWTregaugemult} we calculate DFs corresponding to the \emph{`re-gauged' FFHS}, that are multiplicative. These `re-gauged' DFs are essentially equivalent to those derived in \citet{Allanson-2015PoP}, as functions of $z$ and $\boldsymbol{v}$. However they are different as functions of $\boldsymbol{p}_{s}$. The involved calculations that prove the necessary properties of convergence and boundedness of the above DFs, by using techniques established in this paper, are included in Appendix A.

\subsection{Multiplicative DF for the FFHS in the `original' gauge: $\beta_{pl}\in (0\,,\,\infty)$ }\label{subsec:ANWT}
The `summative' pressure used in \citet{Harrison-2009b} for a FFHS equilibrium is of the form
\begin{equation}
\mathsfbi{P}_{zz}(A_x,A_y)=\frac{B_0^2}{2\mu_0}\left[\frac{1}{2}\cos\left(\frac{2A_x}{B_0L}\right)+\exp\left(\frac{2A_y}{B_0L}\right)\right]+P_b.\label{eq:Pharrison}
\end{equation}
$P_b>B_0^2/(4\mu_0)$ is a constant that ensures positivity of $\mathsfbi{P}_{zz}$. This is the function that we exponentiate according to (\ref{eq:Ptrans}) and (\ref{eq:Pfunc}). To suit the problem we choose a pressure function and $g_{s}$ function of the form 
\begin{eqnarray*}
\bar{\mathsfbi{P}}_{zz}&=&n_0\exp\left(-\frac{1}{2\beta_{pl}}\right)\frac{\beta_e+\beta_i}{\beta_e\beta_i}\bar{P}_1(A_x)\bar{P}_2(A_y),\\
g_{s}&=&\exp\left(-\frac{1}{2\beta_{pl}}\right)g_{1s}(p_{xs};v_{th,s})g_{2s}(p_{ys};v_{th,s}).
\end{eqnarray*}

To use the method presented in Section 2., we now need to Maclaurin expand the complicated pressure function $\bar{P}_{zz}$. There is a result from combinatorics due to Eric Temple Bell that allows one to extract the coefficients of a power series, $f(x)$, that is itself the exponential of a known power series, $h(x)$, see \citet{Bell-1934}. If $f(x)$ and $h(x)$ are defined 
\begin{equation}
f(x)=\; {\rm e}^{h(x)}, \hspace{5mm} h(x)=\sum_{m=1}^{\infty}\frac{1}{m!}\zeta_mx^m ,\label{eq:exppower}
\end{equation}
then we can use `Complete Bell polynomials', also known as `Exponential Bell polynomials' and hereafter referred to as CBPs, to write $f(x)$ as
\begin{equation}
f(x)=\sum_{m=0}^\infty \frac{1}{m!}Y_m(\zeta_1,\zeta_2,...,\zeta_m)x^m. \label{eq:Bell}
\end{equation}
$Y_m(\zeta_1,\zeta_2,...\zeta_m)$ is the $m^{\rm{th}}$ CBP. Instructive references on CBPs can be found in \citet{Riordan-1958, Comtet-1974, Kolbig-1994, Connon-2010} for example. Here, the Maclaurin coefficients for the exponential and cosine functions of equation (\ref{eq:Pharrison}) are used as the arguments of the CBPs. These CBPs are used to form the Maclaurin coefficients of $\bar{P}_1$ and $\bar{P}_2$ as in equation (\ref{eq:Bell}). As detailed in \citet{Allanson-2015PoP}, the result is a pressure function of the form
\begin{equation}
\bar{\mathsfbi{P}}_{zz}=n_0\exp\left(\frac{-1}{2\beta_{pl}}\right)\frac{\beta_e+\beta_i}{\beta_e\beta_i}\sum_{m=0}^\infty a_{2m}\left(\frac{A_x}{B_0L}\right)^{2m}\sum_{n=0}^\infty b_n\left(\frac{A_y}{B_0L}\right)^{n} , \label{eq:Pmacl}
\end{equation} 
with $a_{2m}$ and $b_n$ defined by
\begin{eqnarray}
a_{2m}&=&\exp\left(\frac{1}{2\beta_{pl}}\right)\frac{(-1)^m2^{2m}}{(2m)!}Y_{2m}\left(0,\frac{1}{2\beta_{pl}}, 0 , ... , 0 , \frac{1}{2\beta_{pl}}\right),\label{eq:a2msimple}\\
b_n&=&\exp\left(\frac{1}{\beta_{pl}}\right)\frac{2^n}{n!}Y_n\left(\frac{1}{\beta_{pl}}, ..., \frac{1}{\beta_{pl}}\right).\label{eq:bnsimple}
\end{eqnarray}
The resultant DF is given
\begin{eqnarray}
&&f_{s}=\frac{n_{0}   e^{-1/(2\beta_{pl})}}{(\sqrt{2\upi}v_{th,s})^3}\; {\rm e}^{-\beta_sH_s}\nonumber\\
&&\times\sum_{m=0}^\infty a_{2m}\left(\frac{\delta_s}{\sqrt{2}}\right)^{2m}H_{2m}\left(\frac{p_{xs}}{\sqrt{2}m_{s}v_{th,s}}\right)\nonumber\\\
&&\times\sum_{n=0}^\infty b_{n}{\rm sgn}(q_{s})^n\left(\frac{\delta_s}{\sqrt{2}}\right)^{n}H_{n}\left(\frac{p_{ys}}{\sqrt{2}m_{s}v_{th,s}}\right).\label{eq:result}
\end{eqnarray}

 \subsection{Multiplicative DF for the `re-gauged' FFHS: $\beta_{pl}\in (0\,,\,\infty)$ }\label{subsubsec:ANWTregaugemult}
We will now calculate a multiplicative DF for the `re-gauged' FFHS, in the same style as \citet{Allanson-2015PoP}, in the effort to produce a low-beta DF for the FFHS that is easier to calculate numerically, and hence plot. This re-gauging is equivalent to adding a constant to $A_{x}$ and so corresponds to a shift in the origin of the $A_x$ dependent part of the summative $\mathsfbi{P}_{zz}$ used in \citet{Harrison-2009b}. As a result, one can derive a new summative pressure function in the same manner as in \citet{Harrison-2009b}, corresponding to this new gauge, as 
\begin{equation}
\mathsfbi{P}_{zz}=\frac{B_0^2}{2\mu_0}\left[\sin ^2\left(\frac{A_x}{B_0L}\right) + \exp\left(\frac{2A_y}{B_0L}\right)   \right]\label{eq:pnewgauge}
\end{equation}
The next step is to construct a multiplicative pressure tensor. Using the same pressure transformation technique as in \citet{Allanson-2015PoP} and Subsection \ref{subsec:ANWT}, on the $\mathsfbi{P}_{zz}$ given in equation (\ref{eq:pnewgauge}), we arrive at the `re-gauged' multiplicative pressure
\begin{eqnarray}\mathsfbi{P}_{zz}&=&P_0e^{-1/\beta_{pl}}\exp\left[ \frac{1}{\beta_{pl}}\left( \sin ^{2}\left(\frac{A_x}{B_0L}\right)+\exp\left(\frac{2A_y}{B_0L}\right)   \right)         \right]  \\
&=&P_0\exp{ \left[ \sum_{n=1}^\infty\frac{1}{(2n)!} \nu_{2n}\left(\frac{A_x}{B_0L}\right)^{2n}   \right]                }  \exp{ \left[\sum_{n=1}^\infty \frac{1}{n!} \xi_n \left(\frac{A_y}{B_0L}\right)^n  \right]                }     ,\end{eqnarray} 
with the coefficients defined by
\[ \nu_{2n}=\frac{(-1)^{n+1}2^{2n-1}}{\beta_{pl}}  ,\hspace{5mm}\xi_n=\frac{2^n}{\beta_{pl}}\, .\]
We now use the theory of CBPs, as in \citet{Allanson-2015PoP} and Subsection \ref{subsec:ANWT}, to write the pressure as
\begin{eqnarray}
\mathsfbi{P}_{zz}&=&P_0\sum_{m=0}^\infty \frac{1}{(2m)!}Y_{2m}\left(0\,,\,\nu_2\,,\,0\,,\, \nu_4\, ,\, ...\,,0\,,\, \nu_{2m}\right)\left(\frac{A_x}{B_0L}\right)^{2m}\nonumber\\
&\times&\sum_{n=0}^\infty \frac{1}{n!}Y_{n}\left(\xi_1\,,\,\xi_2\,,\, ...\,,\,\xi_n        \right)\left(\frac{A_y}{B_0L}\right)^{n}.\nonumber
\end{eqnarray}
Using a simple scaling argument as in \citet{Bell-1934,Connon-2010}, $Y_j(ax_1,a^2x_x,...,a^jx_j)=a^jY_j(x_1,x_2,...,x_j)$, gives
\begin{eqnarray}
\mathsfbi{P}_{zz}&=&P_0\sum_{m=0}^\infty \frac{(-1)^m2^{2m}}{(2m)!}Y_{2m}\left(0\,\,\frac{-1}{2\beta_{pl}}\,,\,0\,,\, \frac{-1}{2\beta_{pl}}\, ,\, ...\,,\,0\,,\, \frac{-1}{2\beta_{pl}}\right)\left(\frac{A_x}{B_0L}\right)^{2m}\nonumber\\
&\times&\sum_{n=0}^\infty \frac{2^m}{n!}Y_{n}\left(\frac{1}{\beta_{pl}}\,,\,\frac{1}{\beta_{pl}}\,,\, ...\,,\,\frac{1}{\beta_{pl}}          \right)\left(\frac{A_y}{B_0L}\right)^{n}.\nonumber
\end{eqnarray}
Using the methods established in this paper, namely expansion over Hermite polynomials, we calculate a DF that gives the above pressure
\begin{eqnarray}
&&f_{s}=\frac{n_0}{(\sqrt{2\pi}v_{th,s})^3}e^{-\beta_sH_s}\times\nonumber\\
&&\sum_{m=0}^\infty a_{2m}\left(\frac{\delta_s}{\sqrt{2}}\right)^{2m}H_{2m}\left(\frac{p_{xs}}{\sqrt{2}m_{s}v_{th,s}}\right)\times\nonumber\\
&&\sum_{n=0}^\infty b_n{\rm sgn}(q_{s})^n\left(\frac{\delta_s}{\sqrt{2}}\right)^nH_n\left(\frac{p_{ys}}{\sqrt{2}m_{s}v_{th,s}}\right),\label{eq:gammaANWT}
\end{eqnarray}
for
\begin{eqnarray}a_{2m}&=&\frac{(-1)^m2^{2m}}{(2m)!}Y_{2m}\left(0\,\,\frac{-1}{2\beta_{pl}}\,,\,0\,,\, \frac{-1}{2\beta_{pl}}\, ,\, ...\,,\,0\, ,\, \frac{-1}{2\beta_{pl}}\right)\nonumber ,\\
b_n&=&\frac{2^m}{n!}Y_{n}\left(\frac{1}{\beta_{pl}}\,,\,\frac{1}{\beta_{pl}}\,,\, ...\,,\,\frac{1}{\beta_{pl}}          \right).
\end{eqnarray}
One can readily calculate the number density for this DF using standard integral results (\citet{Gradshteyn}) to be 
\[N_s(A_{x},A_{y})=n_0\sum_{m=0}^\infty a_{2m}\left(\frac{A_x}{B_0L}\right)^{2m}\sum_{n=0}^\infty b_{n}\left(\frac{A_y}{B_0L}\right)^n= P_0\frac{\beta_e\beta_i}{\beta_e+\beta_i}.\]

\subsection{Plots of the exponential `re-gauged' distribution function for the FFHS}
We now present plots for the DF given in equation (\ref{eq:gammaANWT}), for $\beta_{pl}=0.05$ and $\delta_e=\delta_i=0.03$. This value for $\beta_{pl}$ is substantially lower than the value used in \citet{Allanson-2015PoP}, which had $\beta_{pl}=0.85$. The ability to go down to lower values of the plasma beta is due to the re-gauging process as explained in Subsection \ref{Subsec:regauge}. The plots that we show are intended to demonstrate progress in the numerical evaluation of low-beta DFs for nonlinear force-free fields, and as a proof of principle. Note that whilst the re-gauging process has allowed us to attain numerical convergence for low values of $\beta_{pl}$, the DF is proven to be convergent for all values of the relevant parameters.

The value of $\delta_s$ is chosen such that $\delta_s<\beta_{pl}$, since as explained in \citet{Allanson-2015PoP}, attaining convergence numerically has not been easy for values of $\delta_s>\beta_{pl}$ when $\beta_{pl}<1$.

Initial investigations of the shape of the variation of the DF in the $v_x$ and $v_y$ directions indicate that the DF seems to have a Gaussian profile, as in the DFs analysed in \citet{Allanson-2015PoP}. Hence, as in that work, we shall compare the DFs calculated in this work to `flow-shifted' Maxwellians, 
\begin{equation}
f_{Maxw,s}=\frac{n_0}{(\sqrt{2\pi}v_{th,s})^3}\exp\left[\frac{\left(\mathbf{v}-\langle\mathbf{v}\rangle_s(z)\right)^2}{2v_{th,s}^2}\right], \label{eq:Maxshift}
\end{equation}   
in order to measure the actual difference between the Vlasov equilibrium $f_{s}$, and the Maxwellian $f_{Maxw,s}$. The above distribution reproduces identical zeroth- and first-order moments (as functions of $z$) as the DF defined by equation (\ref{eq:gammaANWT}), namely $n_0$ and $n_0\langle\mathbf{v}\rangle_s$. However, unlike the DF derived in this paper, $f_{Maxw,s}$ is not a solution of the Vlasov equation and hence not an equilibrium solution. For examples of using `flow-shifted' Maxwellians in kinetic simulations, see \citet{Hesse-2005,Guo-2014}.

In figures (\ref{fig:1a}-\ref{fig:1e}) and (\ref{fig:2a}-\ref{fig:2e}) we give contour plots in $(v_x/v_{th,s},v_y/v_{th,s})$ space of the `raw' difference between the DFs defined by equation (\ref{eq:gammaANWT}) and (\ref{eq:Maxshift}). These figures bear close resemblance to those presented in \citet{Allanson-2015PoP}. Specifically, we see `shallower' peaks for the exact Vlasov solution, $f_{s}$, than for $f_{Maxw,s}$. There is also a clear anisotropic effect in that $f_{s}$ falls of more quickly in the $v_x$ direction than the $v_y$ direction as compared to $f_{Maxw,s}$. Note that whilst the raw differences plotted in these figures may not seem substantial, they can in fact be substantial as a proportion of $f_{Maxw,s}$, and even of the order of the magnitude of $f_{Maxw,s}$. As a demonstration of this fact we present plots in figures (\ref{fig:3a}-\ref{fig:3e}) and (\ref{fig:4a}-\ref{fig:4e}) of the quantity defined by
\[
f_{diff,s}=(f_{s}-f_{Maxw,s})/f_{Maxw,s}
\]
for line cuts through $(v_x/v_{th,s},v_y/v_{th,s}=0)$ and $(v_x/v_{th,s}=0,v_y/v_{th,s})$ respectively, for the ions. As suggested by the contour plots, $f_{diff,i}$ takes on significantly larger values in the $v_y$ direction, indicating that the tail of $f_i$ falls off less quickly than $f_{Maxw,i}$ in $v_y$ than in $v_x$.

We are yet to observe multiple peaks in the multiplicative DFs for the FFHS, derived herein and in \citet{Allanson-2015PoP}. However, the summative Harrison-Neukirch equilibria (\citet{Harrison-2009b}) could develop multiple maxima for sufficiently large values of the magnitude of the drift velocities. For the DF derived in this paper, and as in \citet{Allanson-2015PoP}, the `amplitude' of the drift velocity profile across the current sheet is given by
\begin{equation}
\frac{u_{s}}{v_{th,s}}=2\text{sgn}(q_{s})\frac{\delta_s }{\beta_{pl}},\nonumber
\end{equation}
where $u_s$ represents the maximum value of the drift velocities. As a result, large values of the drift velocity correspond to large values of $\delta_s/\beta_{pl}$, and these are exactly the regimes for which we are struggling to attain numerical convergence. This theory suggests that we may not be seeing DFs with multiple maxima because we are not in the appropriate parameter space.

 \section{Illustrative case for a non-force-free magnetic field}\label{Sec:Channell}
The work in this paper was initially motivated by attempts to find DFs for force-free equilibria  ($\boldsymbol{j}\times\boldsymbol{B}=\nabla \mathsfbi{P}_{zz}=\boldsymbol{0}$). However there is nothing in the formal solution method for the inverse problem $\mathsfbi{P}_{zz}(A_{x},A_{y}) \to g_{s}(p_{xs},p_{ys})$ that requires the magnetic field under consideration to be force-free. Here we give an example of the use of the solution method to a pressure function that was first discussed in \citet{Channell-1976}. In that paper, Channell actually solved the inverse problem by the Fourier transform method, and showed that the solution was valid given certain restrictions on the parameters. We tackle the problem via the Hermite Polynomial method, and find that for the resultant DF to be convergent, we require exactly the same restrictions as Channell. This parity between the validity of the two methods is reassuring, and implies that the necessary restrictions on the parameters are in a sense `method independent', and are the result of fundamental restrictions on the inversion of Weierstrass transformations. 

The magnetic field considered by Channell is of the form \[\boldsymbol{B}=(B_{x}(z)\,,0\,,0),\] with a pressure function \[\mathsfbi{P}_{zz}=P_0e^{-\gamma \tilde{A}_{y}^{2}}\] for $\tilde{A}_{y}=A_{y}/(B_{0}L)$ and $\gamma>0$ dimensionless. Note that the $\gamma$ used by Channell has dimensions equivalent to $1/(B_{0}^{2}L^{2})$. Note also that since the pressure is not constant, $P_0$ does not represent the value of the pressure, rather it is just some reference value. We can now write the details of the inversion. The equation we must solve, for a DF given by \[f_{s}=\frac{n_{0}}{(\sqrt{2\pi}v_{th,s})^{3}}e^{-\beta_{s}H_{s}}g_{s}(p_{ys};v_{th,s})\]  is
\[
P_0\exp \left(-\gamma \frac{A_{y}^{2}}{B_{0}^{2}L^{2}}\right)=\frac{n_0 (\beta_e+\beta_i)}{\beta_e \beta_i} \frac{1}{\sqrt{2\pi}m_{s}v_{th,s}}\int_{-\infty}^\infty {\rm e}^{-(p_{ys}-q_{s}A_y)^2/(2m_{s}^2v_{th,s}^2)}g_{s}dp_{ys}.
\]
We can immediately formally invert this equation as per the methods described in this paper, given the Macluarin expansion of the pressure 
\[ \mathsfbi{P}_{zz}=P_0\sum_{m=0}^\infty a_{2m}\left(\frac{A_y}{B_0L}\right)^{2m}\hspace{3mm}\text{s. t.} \hspace{3mm}a_{2m}=\frac{(-1)^m\gamma ^m}{m!},\]
to give
\[g_{s}(p_{ys})=\sum_{m=0}^\infty \left(\frac{\delta_s}{\sqrt{2}}\right)^{2m}a_{2m}H_{2m}\left(\frac{p_{ys}}{\sqrt{2}m_{s}v_{th,s}}\right).\]
Let us turn to the question of convergence. Theorem 1 states that if
\[\lim_{m\to\infty} m\left|\frac{a_{2m+2}}{a_{2m}}\right|<1/(2\delta_s^2), \]
then the $g_{s}$ function is convergent. This is readily seen to imply that if $\gamma$ satisfies
\[\gamma< \frac{1}{2\delta_s^2},\]
then the Hermite series representation for  $g_{s}$ is convergent. This condition is exactly equivalent to the one derived by Channell (equation (28) in the paper). Note that now that we have established convergence for particular $\gamma$, then boundedness results follow as per other results given in this paper, detailed in Appendix A. One more question remains, namely how does the $g_{s}$ function derived compare to the Gaussian $g_{s}(p_{ys})$ function derived by Channell 
\[g_{s}\propto e^{-4\gamma^2\delta_s^4p_{ys}^2/(1-4\gamma^2\delta_s^4)}\]
(in our notation) using the method of Fourier transforms? In fact, one can see by setting $y=0$ in Mehler's Hermite Polynomial formula (\citet{Watson-1933}) 
\[\frac{1}{\sqrt{1-\rho^2}}\exp\left[\frac{2xy\rho-(x^2+y^2)\rho^2    }{1-\rho^2}   \right]=\sum_{n=0}^{\infty}\frac{\rho ^n}{2^n n!} H_n(x)H_n(y)  ,  \]
and using 
\begin{equation}
H_m(0)=
\begin{cases}
0 & \text{if}\hspace{3mm} m\hspace{3mm} \text{is odd}, \\
  (-1)^{m/2}m!/(m/2)!      & \text{if}\hspace{3mm} m\hspace{3mm} \text{is even},
\end{cases}\nonumber
\end{equation}
(see \citet{Gradshteyn} for example) we see that the Hermite series represents a Gaussian function in the range $| \rho |<1$. This is equivalent to the condition derived above for convergence, $\gamma< 1/(2\delta_s^2)$. Hence, we have shown that for this specific example - solvable by using both Hermite polynomials and Fourier transforms - the two methods used to solve the inverse problem give equivalent functions with equivalent ranges of validity.

\section{Summary}
The primary result of this paper is the rigorous generalisation of a solution method that exactly solves the `inverse problem' in 1-D collisionless equilibria, for a certain class of equilibria. Specifically, given a pressure function, $\mathsfbi{P}_{zz}(A_x,A_y)$, of a separable form, neutral equilibrium distribution functions can be calculated that reproduce the prescribed macroscopic equilibrium, provided $\mathsfbi{P}_{zz}$ satisfies certain conditions on the coefficients of its (convergent) Maclaurin expansion, and is itself positive. In particular, for force-free magnetic fields, there is an algorithmic path taking the magnetic field, $\boldsymbol{B}(\boldsymbol{A})$, as input, and giving the distribution function $f_{s}$ as output. 

The distribution function has the form of a Maxwellian modified by a function $g_s$, itself represented by -- possibly infinite -- series of Hermite polynomials in the canonical momenta. It is crucial that these series are convergent and positive for the solution to be meaningful. A sufficient condition was derived for convergence of the distribution function by elementary means, namely the ratio test, with the result a restriction on the rate of decay of the Maclaurin coefficients of $\mathsfbi{P}_{zz}$. We also argue that for such a pressure function that is also positive, that the Hermite series representation of the modification to the Maxwellian is positive, for sufficiently low values of the magnetisation parameter, i.e. lower than some critical value. This was actually proven for a certain class of $g_s$ functions, and differentiability of $g_s$ was assumed. It would be interesting in the future to investigate whether this critical value of the magnetisation parameter can be determined. Note that whilst we have not yet determined the critical value, we have not yet observed negative distribution functions for the pressure functions and parameter ranges studied.

Examples of the use of the Hermite Polynomial method are given for DFs that correspond to the force-free Harris Sheet, including calculations for a DF with a different gauge to that considered previously, motivated by numerical reasons. We have presented some plots of a comparison between the re-gauged DFs and shifted Maxwellian functions, as a proof of principle, namely that numerical convergence for values of $\beta_{pl}$ lower than previously reached, can now be attained ($\beta_{pl}=0.05$). Verification of the analytical properties of convergence and boundedness of the distribution functions written as infinite sums over Hermite polynomials are given in appendix A. Note that the verification of these distribution functions is rather involved due to the complex nature of the specific Maclaurin expansions that we consider, and is simpler for more `straightforward' expansions, e.g. for the example considered in Section \ref{Sec:Channell}. 

We have demonstrated the application of the solution method presented in this paper to the force-free Harris Sheet magnetic field. However, potential uses go beyond this example, including magnetic fields that are not force-free. To this end we consider a non-force-free example in Section \ref{Sec:Channell}. This particular example already has a known solution and range of validity in parameter space, obtained by a Fourier transform method in \citet{Channell-1976}. We obtain a solution with an alternate representation using the Hermite Polynomial method. The Hermite series obtained is shown to be equivalent to the representation obtained by Channell, and to have the exact same range of validity in parameter space. It is not clear if this equivalence between solutions obtained by the two different methods is true in general. Our problem is somewhat analagous to the heat/diffusion equation, and in that `language' the question of the equivalence of solutions is related to the `backwards uniqueness of the heat equation' (see e.g. \citet{Evansbook}). The degree of similarity between our problem and the one described by Evans, and its implications, are left for future investigations.

Also, whilst we have assumed that the pressure is separable (either summatively or multiplicatively), the method should be adaptable in the `obvious way' for pressures that are a combination of the two types. Interesting further work would be to see if the method can be adapted to work for pressure functions that are non-separable, i.e. of the form
\begin{equation*}
\mathsfbi{P}_{zz}=\sum_{m,n}\mathcal{C}_{mn}\left(\frac{A_x}{B_0L}\right)^{m}\left(\frac{A_y}{B_0L}\right)^{n}.
\end{equation*} 
This would be pertinent for pressure tensors transformed in such a way that they are no longer separable.

Other future work could involve an in-depth parameter study of the new re-gauged multiplicative distribution function for the FFHS, with an analysis of how far the exact equilibrium distribution function differs from an appropriately flow-shifted Maxwellian, frequently used in fully kinetic simulations for reconnection studies. In particular it would be interesting to see how much the distribution functions differ from flow-shifted Maxwellians as the set of parameters $(\beta_{pl}, \delta_{s})$ are varied across a wide range. Preliminary numerical investigations verify that plotting distribution functions for the FFHS with a lower $\beta_{pl}$ than previously achieved, namely $\beta_{pl}=0.05$ rather than $\beta_{pl}=0.85$, has been made possible by the theoretical developments in this paper. We have not yet observed multiple maxima for the distribution functions, but do see significant deviations from Maxwellian distributions, and an anisotropy in velocity space.

\section*{}
The authors would like to thank the anonymous referees, whose comments have significantly improved the manuscript. OA would also like to acknowledge helpful correspondence with Professor Wilfrid S Kendall, University of Warwick.

The authors gratefully acknowledge the financial support of the Leverhulme Trust [F/00268/BB] (TN \& FW), a Science and Technology Facilities Council Consolidated Grant [ST/K000950/1] (TN \& FW), a Science and Technology Facilities Council Doctoral Training Grant [ST/K502327/1] (OA) and an Engineering and Physical Sciences Research Council Doctoral Training Grant [EP/K503162/1] (ST). The research leading to these results has received funding from the European
Commission's Seventh Framework Programme FP7 under the grant agreement SHOCK [284515] (OA, TN \& FW). 

\appendix
\section{Convergence and boundedness of the FFHS DFs}\label{app:A}
\subsection{Multiplicative DF for the FFHS in the `original' gauge: $\beta_{pl}\in (0\,,\,\infty)$ }
\subsubsection{Convergence of the Hermite representation of $g_s$}
Here we include the full details of the calculations that confirm the validity of the Hermite Polynomial representation of the multiplicative FFHS equilibrium in original gauge (\citet{Allanson-2015PoP}), for the first time. We shall first verify the convergence of $g_{2s}$ (expanded over $n$ in equation (\ref{eq:result})) using the convergence condition from Subsection \ref{sec:convergence}, and then verify convergence of $g_{1s}$ by comparison with $g_{2s}$. As Theorem 1 states, we can verify convergence of $g_{2s}$ provided
\[
\lim_{{n\to\infty}}n\left|\frac{b_{n+1}}{b_n}\right|<1/\delta_s
.\]
Explicit expansion of the exponentiated exponential series by `twice' using Maclaurin series (as opposed to the CBP formulation of equation (\ref{eq:Bell})) gives
\begin{equation}
b_n=\frac{2^n}{n!}\sum_{k=0}^\infty \frac{k^n}{\beta_{pl}^kk!}\label{eq:bmac}
\end{equation}
and so
\begin{eqnarray*}
b_{n+1}/b_n&=&\frac{2}{n+1}\sum_{j=1}^\infty \frac{j^n}{(j-1)!\beta_{pl}^{j}}\biggr/\sum_{j=1}^\infty\frac{j^n}{j!\beta_{pl}^{ j}}\\
&=&\frac{2}{n+1}\left(\frac{\displaystyle \frac{1}{0!\beta_{pl}}+\frac{2^n}{1!\beta_{pl}^{ 2}}+\frac{3^n}{2!\beta_{pl}^{ 3}}+...}{\displaystyle \frac{1}{1!\beta_{pl}}+\frac{2^n}{2!\beta_{pl}^{ 2}}+\frac{3^n}{3!\beta_{pl}^{ 3}}+...}\right)\\
&=&\frac{2}{n+1}\left(\frac{\displaystyle \frac{1}{\beta_{pl}}+2\frac{2^n}{2!\beta_{pl}^{ 2}}+3\frac{3^n}{3!\beta_{pl}^{ 3}}+...}{\displaystyle \displaystyle \frac{1}{1!\beta_{pl}}+\frac{2^n}{2!\beta_{pl}^{ 2}}+\frac{3^n}{3!\beta_{pl}^{ 3}}+...}\right).
\end{eqnarray*}
The $k$th `partial sum' of this fraction has the form
\[
r_k=\frac{p_1+2p_2+3p_3+...+kp_k}{p_1+p_2+p_3+...}
\]
with $p_i\asymp 1/i!$, where we write $g\asymp h$ to mean $g/h$ and $h/g$ are bounded away from $0$.
Now since the denominator of the $p_{i}$ increase super-exponentially (factorially) we have $ i p_{i}\asymp p_{i}$ and hence
\[
0<\sum_{i=1}^{\infty}ip_{i}<\infty \hspace{3mm}{\rm and }\hspace{3mm} 0<\sum_{i=1}^{\infty}p_{i}<\infty.
\]
Thus $r_{k}\to r_{\infty}\in (0,\infty)$ and, more specifically, $r_{\infty}\asymp 1$ in $n$. 
Therefore 
\[
b_{n+1}/b_{n}= r_{\infty}/(n+1)\asymp 1/n.
\]
That is to say $b_{n+1}/b_{n}$ behaves asymptotically like $1/n$. This satisfies the condition of Theorem 1. Hence $g_{2s}(p_{ys})$ converges  for all $\delta_s$ and $p_{ys}$ by the comparison test.  

We shall now verify convergence of $g_{1s}$, by comparison with $g_{2s}$. By explicitly using the Maclaurin expansion of the exponential, and then the power-series representation for $\cos^nx$ from \citet{Gradshteyn}
\begin{eqnarray*}
\cos ^{2n}x&=&\frac{1}{2^{2n}}\left[\sum_{k=0}^{n-1}2{2n \choose k}\cos (2(n-k)x)+{2n \choose n} \right],\\
\cos ^{2n-1}x&=&\frac{1}{2^{2n-2}}\sum_{k=0}^{n-1}{2n-1 \choose k}\cos ((2n-2k-1)x),
\end{eqnarray*}
one can calculate 
\[
\exp\left(\frac{1}{2\beta_{pl}}\cos\left(\frac{2A_x}{B_0L}\right)\right)=\sum_{m=0}^{\infty}a_{2m}\left(\frac{A_x}{B_0L}\right)^{2m}.
\]
The zeroth coefficient is given by $a_0=\exp\left(1/(2\beta_{pl})\right)$, and the rest are
\[
a_{2m}=\frac{2(-1)^m}{(2m)!}\sum_{k=0}^\infty\sum_{j\in J_{k}} \frac{1}{j!(4\beta_{pl})^j}{j \choose k}(j-2k)^{2m}, \label{eq:appendix1}
\]
for $J_{k}=\left\{2k+1, 2k+2, ...\right\}$ and $m\neq 0$. By rearranging the order of summation, $a_{2m}$ can be written 
\[
a_{2m}=\frac{2(-1)^m}{(2m)!}\sum_{j=1}^\infty \frac{1}{j!(4\beta_{pl})^j}\sum_{k=0}^{\lfloor (j-1)/2 \rfloor}{j \choose k}(j-2k)^{2m}, 
\]
where $\lfloor x \rfloor$ is the floor function, denoting the greatest integer less than or equal to $x$. Recognising an upper bound in the expression for $a_{2m}$;
\[
\sum_{n=0}^{\lfloor (j-1)/2 \rfloor}{j \choose n}(j-2n)^{2m}\leq j^{2m}\sum_{n=0}^j{j \choose n}=2^{j}j^{2m},
\]
gives 
\begin{eqnarray*}
a_{2m}<\frac{2(-1)^m}{(2m)!}\sum_{j=1}^\infty\frac{2^{j+1}j^{2m}}{j!2^j(2\beta_{pl})^j}&=&2\frac{(-1)^m}{(2m)!}\sum_{j=1}^\infty\frac{j^{2m}}{j!(2\beta_{pl})^j},\\
&\le & \frac{2}{(2m)!}\sum_{j=1}^\infty\frac{j^{2m}}{j!(2\beta_{pl})^j},\\
&=&\frac{1}{(2m)!}\sum_{j=1}^\infty\frac{2^{1-j}j^{2m}}{j!\beta_{pl}^j}<b_{2m}
\end{eqnarray*}
Hence we now have an upper bound on $a_{2m}$ for $m\neq 0$ and we know that $a_{2m+1}=0$, and so is bounded above by $b_{2m+1}$. Note also that $a_0<b_0$. Hence, each term in our series for $g_{1s}(p_{xs})$ is bounded above by a series known to converge for all $\delta_s$ according to
\[
a_l\left(\frac{\delta_s}{\sqrt{2}}\right)^lH_l(x)<b_l\left(\frac{\delta_s}{\sqrt{2}}\right)^lH_l(x).
\]
So by the comparison test, we can now say that $g_{1s}\left(p_{xs}\right)$ is a convergent series. Hence the representation of the DF in equation (\ref{eq:result}) is convergent. 

%\subsubsection{Non-negativity}
%Theorem 2. states that the DF is non-negative, provided the Maclaurin coefficients of the pressure, $a_j$, satisfy
%\[
%\sum_{j=0}^\infty \left(\frac{\delta_s}{\sqrt{2}}\right)^{2j}(-1)^j\frac{(2j)!}{j!}a_{2j}>0.
%\]
%First we can note that the $a_{2m}$ coefficients for the exponentiated cosine series have a sign given by $(-1)^m$. Hence they trivially satisfy the above inequality as every term in the above series has positive sign, and we can conclude that the $g_{1s}$ series is non-negative by Theorem 2. The exponentiated exponential series requires more careful thought. We see from (\ref{eq:hermite0}), (\ref{eq:result}) and (\ref{eq:bmac}) that 
%\begin{eqnarray}
%g_{2s}(0)&=&\sum_{k=0}^\infty \frac{1}{\beta_{pl}^k k!}\sum_{n=0}^\infty \left(2\delta_s^2\right)^{n}(-1)^n\frac{1}{n!}k^{2n}\\
%&=&\sum_{k=0}^\infty \frac{1}{k!\beta_{pl}^k}e^{-2\delta_s^2k^2}>0
%\end{eqnarray}
%Hence we can confirm that $g_{2s}$ is non-negative, and so is the entire DF.

\subsubsection{Boundedness of the `original' gauge DF}
Since $g_{1s}$ and $g_{2s}$ are known to be convergent, we know that for a given $z$, the DF is bounded in momentum space by
\begin{eqnarray*}
|f_{s}|&<&\; {\rm e}^{-\beta_sH_s}\exp\left(\frac{p_{xs}^2}{4m_{s}^2v_{th,s}^2}+\frac{p_{ys}^2}{4m_{s}^2v_{th,ss}^2}\right)S_{1s}S_{2s},\\
&=&\; {\rm e}^{-\left(\frac{1}{2}(p_{xs}^2+p_{ys}^2)-2q_{s}(p_{xs}A_x+p_{ys}A_y)+q_{s}^2(A_x^2+A_y^2) \right)/(2m_{s}^2v_{th,s}^2)}S_{1s}S_{2s}
\end{eqnarray*} 
where $S_{1s}$ and $S_{2s}$ are finite constants. The `additional' exponential factors come from the upper bounds on Hermite polynomials used in equation (\ref{eq:hermbound}). This clearly goes to zero for sufficiently large $|p_{xs}|$, $|p_{ys}|$ and is without singularity. We conclude that the distribution is bounded/normalisable.

\subsection{ Multiplicative DF for the `re-gauged' FFHS: $\beta_{pl}\in (0\,,\,\infty)$  }
\subsubsection{Convergence of the Hermite representation of $g_s$}
This DF has the exact same coefficients for the $p_{ys}$-dependent Hermite polynomials as that discussed above. And so we need not verify convergence for that series. And in fact, all that has changed in the analysis of the coefficients for the $p_{xs}$-dependent sum is that we now have to consider the Maclaurin coefficients of $\sin ^2(A_x/(B_0L))$ as opposed to $\cos(2A_x/(B_0L))$. These Maclaurin coefficients both have the same factorial dependence and as such the convergence of the one DF implies the convergence of the other. 

%\subsubsection{Non-negativity}
%As mentioned above, the new DF has the exact same coefficients for the $p_{ys}$ dependent Hermite polynomials as that discussed above. And so we need not verify non-negativity for that series. The $g_{1s}$ series requires more thought however. We can write 
%\begin{eqnarray}
%\exp\left(\frac{1}{\beta_{pl}}\sin ^2\left(\frac{A_x}{B_0L}\right)\right)&=&\sum_{m=0}^\infty \frac{1}{\beta_{pl}^m}\frac{1}{m!}\sin^{2m}\left(\frac{A_x}{B_0L}\right),\nonumber\\ 
%\text{s.t.}\hspace{3mm}\sin^{2m}\left(\frac{A_x}{B_0L}\right)&=&\sum_{n=0}^\infty\sigma_{2n}(m)\left(\frac{A_x}{B_0L}\right)^{2n}.\nonumber
%\end{eqnarray}
%This then implies that 
%\begin{eqnarray}
%g_{1s}(0)&=&\sum_{m=0}^\infty\frac{1}{\beta_{pl}^m}\frac{1}{m!}\sum_{n=0}^\infty\left(\frac{\delta_s}{\sqrt{2\gamma}}\right)^{2n}\frac{(2n)!}{n!}(-1)^n\sigma_{2n}(m),\nonumber\\
%&=&\sum_{m=0}^\infty\frac{1}{\beta_{pl}^m}\frac{1}{m!}K_m.\nonumber
%\end{eqnarray}
%Clearly, if we can demonstrate that $K_m>0$ then we can prove non-negativity. An inspection of the Maclaurin series for $\sin ^{2m}(x)$ for $m=0,1,2,...$, shows that ${\rm sgn}(\sigma_{2n})=(-1)^n$, and as such $K_m>0$ trivially. Hence non-negativity of the DF is demonstrated.

\subsubsection{Boundedness of the `re-gauged' DF}
The boundedness argument is exactly analogous to that made above for the DF in original gauge, and need not be repeated here.

\section{On the lower bound of the $\bar{g}_s$ function}\label{App:nonneg}
Here we give some technical remarks that support our claim that $\bar{g}_s$ (and hence $g_s$) is bounded below, using an argument by contradiction.

First of all consider a smooth $\bar{g}_s$ function that is unbounded from below in positive momentum space. Then, depending on the number and nature of stationary points, either 
\begin{itemize}
\item Case 1: There will be some $\tilde{p}_0$ such that $\bar{g}_s<c<0$  for all $\tilde{p}_s>\tilde{p}_0$. This is a trivial statement if $\bar{g}_s$ has only a finite number of stationary points, whereas in the case of an infinite number of stationary points, all maxima of $\bar{g}_s$ for $\tilde{p}_s>\tilde{p}_0$ must be `away' from zero by a finite amount.
\item Case 2:  In this case the (infinite number of) maxima either can rise above zero, or tend to zero from below in a limiting fashion.
\end{itemize}

If $\bar{g}_s$ is of the type described in Case 1, then we can create an `envelope' $g_{env}$ for $\bar{g}_s$ such that $g_{env}>\bar{g}_s$ for all $\tilde{p}_s$. The envelope we choose is 
\begin{equation}
g_{env}=
\begin{cases}
Le^{\tilde{p}_s^2/4},\text{   for   } \tilde{p}_s\le \tilde{p}_0,\\
c\text{   for   } \tilde{p}_s > \tilde{p}_0.
\end{cases}
\end{equation}
We choose the $Le^{\tilde{p}_s^2/4}$ profile because this represents the absolute upper bound for our convergent Hermite expansions, at a given $\tilde{p}_s$ as seen from (\ref{eq:hermbound}). If we then substitute the $g_{env}$ function for $\bar{g}_s$ in (\ref{eq:newsample}) the integrals give combinations of error functions, from which it is seen that one obtains a negative result for sufficiently large $\tilde{A}$. This is a contradiction since the left-hand side of (\ref{eq:newsample}) is positive for all $\tilde{A}$. Hence we can discount the $\bar{g}_{s}$ functions of the variety described in Case 1, as we have a contradiction.

Case 2 is less simple to treat. The fact that there exists an infinite number of local minima and that the infimum of $\bar{g}_s$ is $-\infty$ implies that there exists an infinite sequence of points in momentum space, $\mathcal{S}_p=\{ \tilde{p}_k\, : \, k=1,2,3 ...\}$, that are local minima of $\bar{g}_s$, such that $\bar{g}_s(\tilde{p}_{k+1})<\bar{g}_s(\tilde{p}_{k})$. Essentially there are an infinite number of minima `lower than the previous one'. For sufficiently large $k=l$, we have that the magnitude of the minima is much greater than the width of the Gaussian, i.e.
\[
|\bar{g}_{s}(\tilde{p}_{l})| \gg 2\sqrt{2}.
\]
In this case the only way that the sampling of $\bar{g}_s$ described by (\ref{eq:newsample}) could give a positive result for a Gaussian centred on the minima is if $\bar{g}_s$ rapidly grew to become sufficiently positive, in order to compensate the negative contribution from the minimum and its local vicinity. However, this seems to be at odds with the condition that $\bar{g}_s$ is smooth, since the function would have to rise in this manner for ever more negative values of the minima (and hence rise ever more quickly) as $k\to \infty$. We claim that this can not happen, and hence we discount the $\bar{g}_{s}$ functions of the variety described in Case 2.

Since there is no asymmetry in momentum-space in this problem, the arguments above hold just as well for for a $\bar{g}_s$ function that is unbounded from below in negative momentum space. It should be clear to see that if $\bar{g}_s$ can not be unbounded from below in either the positive or negative direction, then it can not be unbounded in both directions either.

\bibliographystyle{jpp}

%\bibliography{/user/oliver/Dropbox/biblio}   %Maths
%\bibliography{/Users/Oliver/Dropbox/biblio}                  %Home windows
%\bibliography{/Users/OAllanson/Dropbox/biblio}  %Home mac
\newpage

\begin{figure}
\centering
\begin{subfigure}[b]{0.48\textwidth}
\includegraphics[width=\textwidth]{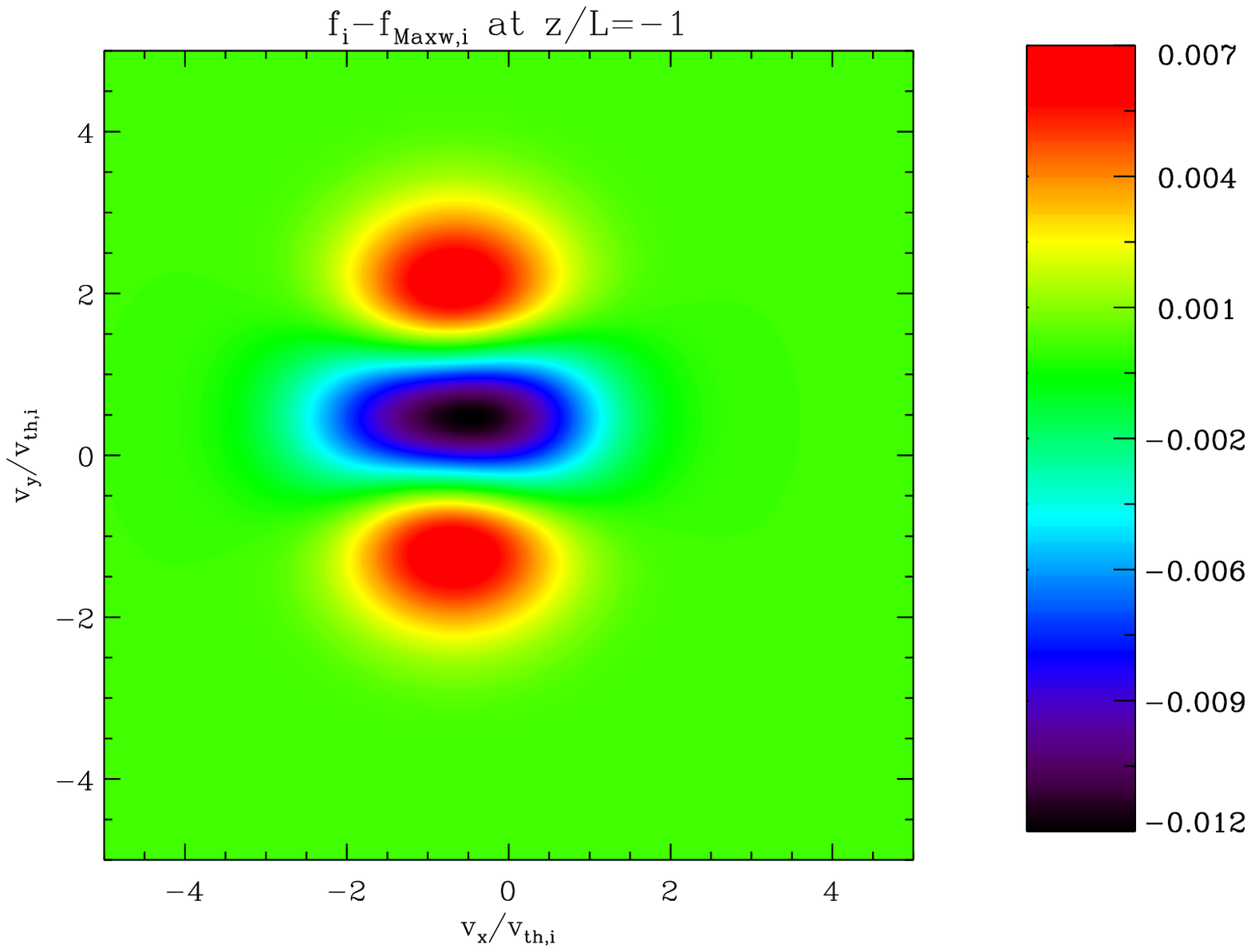}
 \caption{\label{fig:1a}}
\end{subfigure}
\begin{subfigure}[b]{0.48\textwidth}
\includegraphics[width=\textwidth]{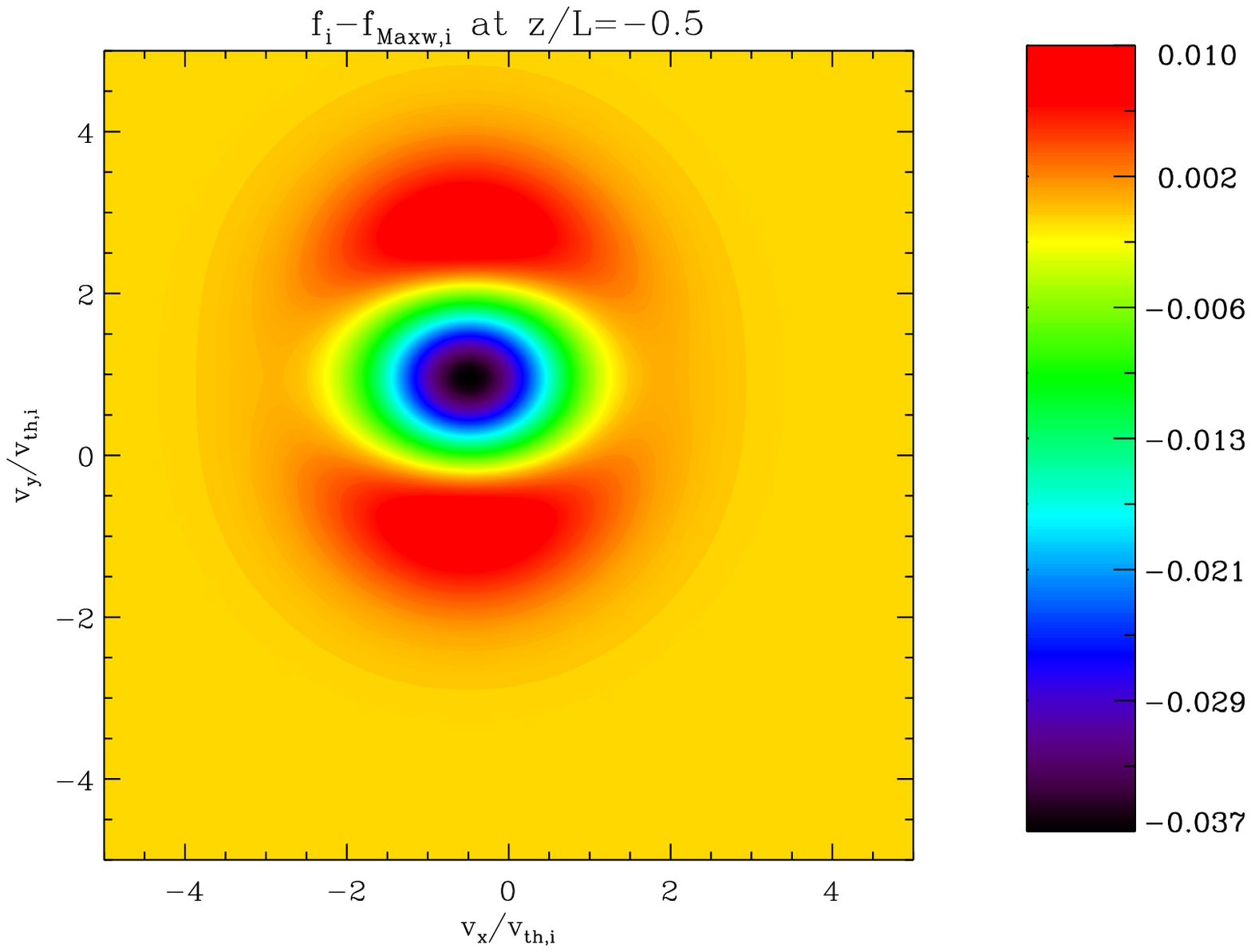}
 \caption{\label{fig:1b}}
\end{subfigure}
\begin{subfigure}[b]{0.48\textwidth}
\includegraphics[width=\textwidth]{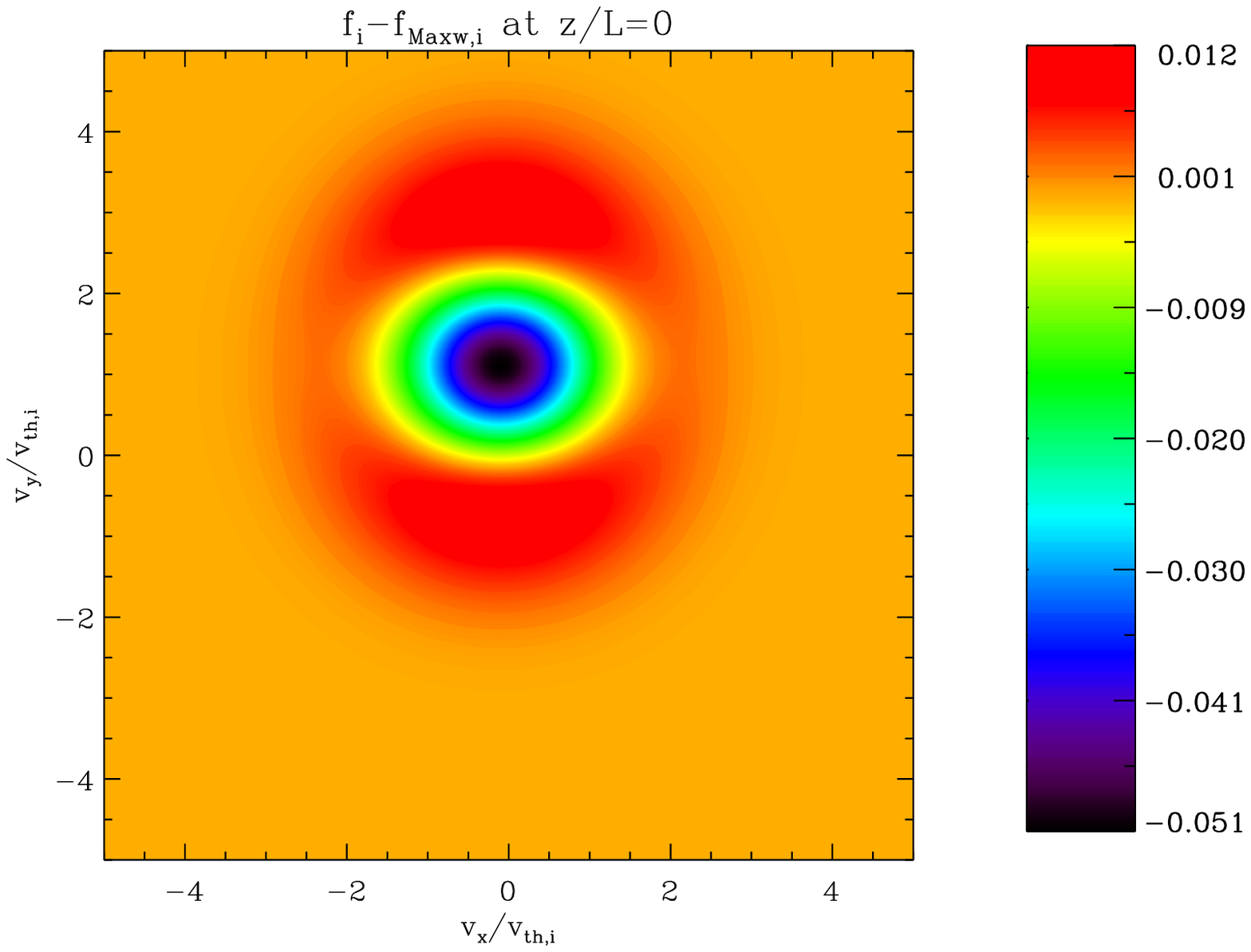}
 \caption{\label{fig:1c}}
\end{subfigure}
\begin{subfigure}[b]{0.48\textwidth}
\includegraphics[width=\textwidth]{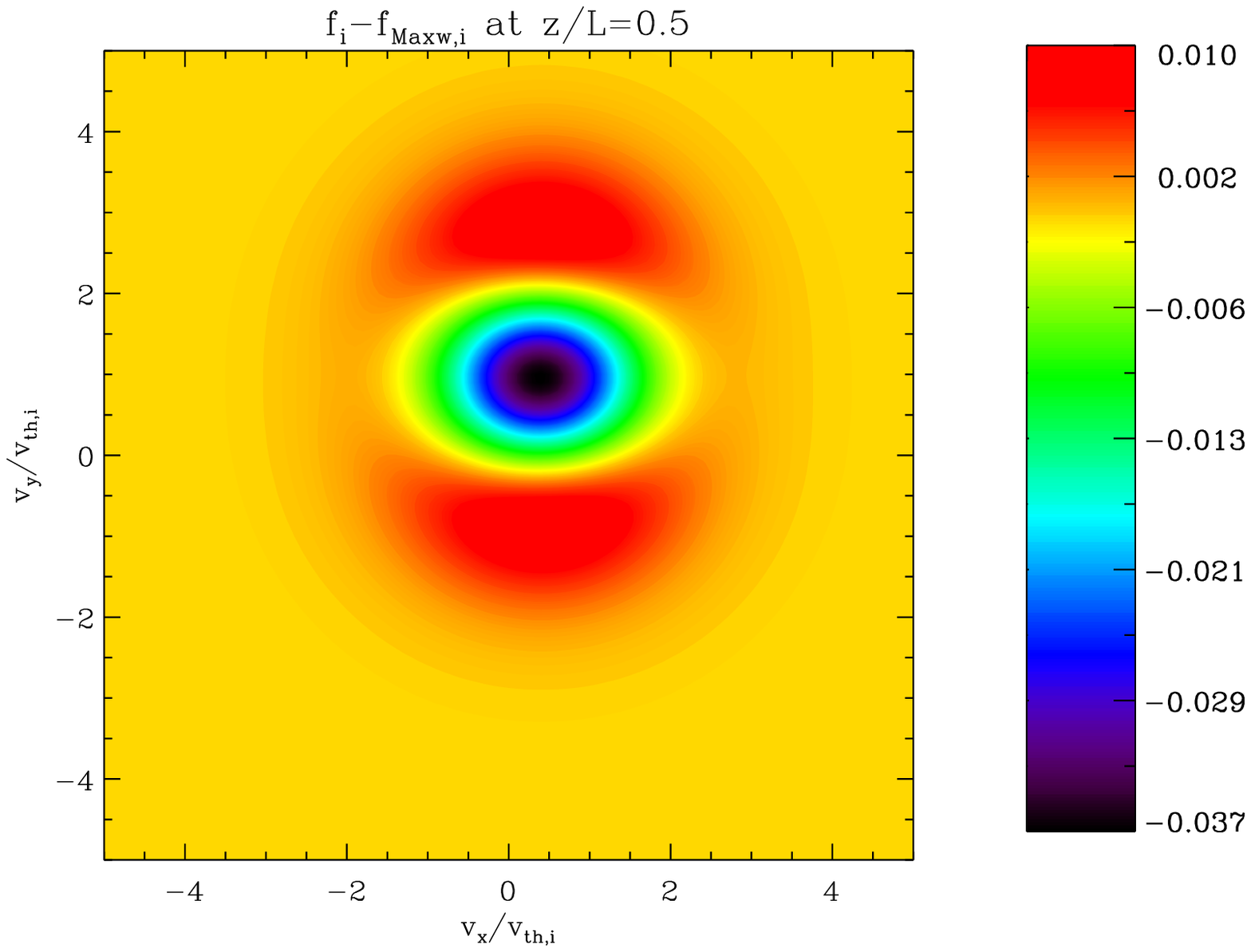}
 \caption{\label{fig:1d}}
\end{subfigure}
\begin{subfigure}[b]{0.48\textwidth}
\includegraphics[width=\textwidth]{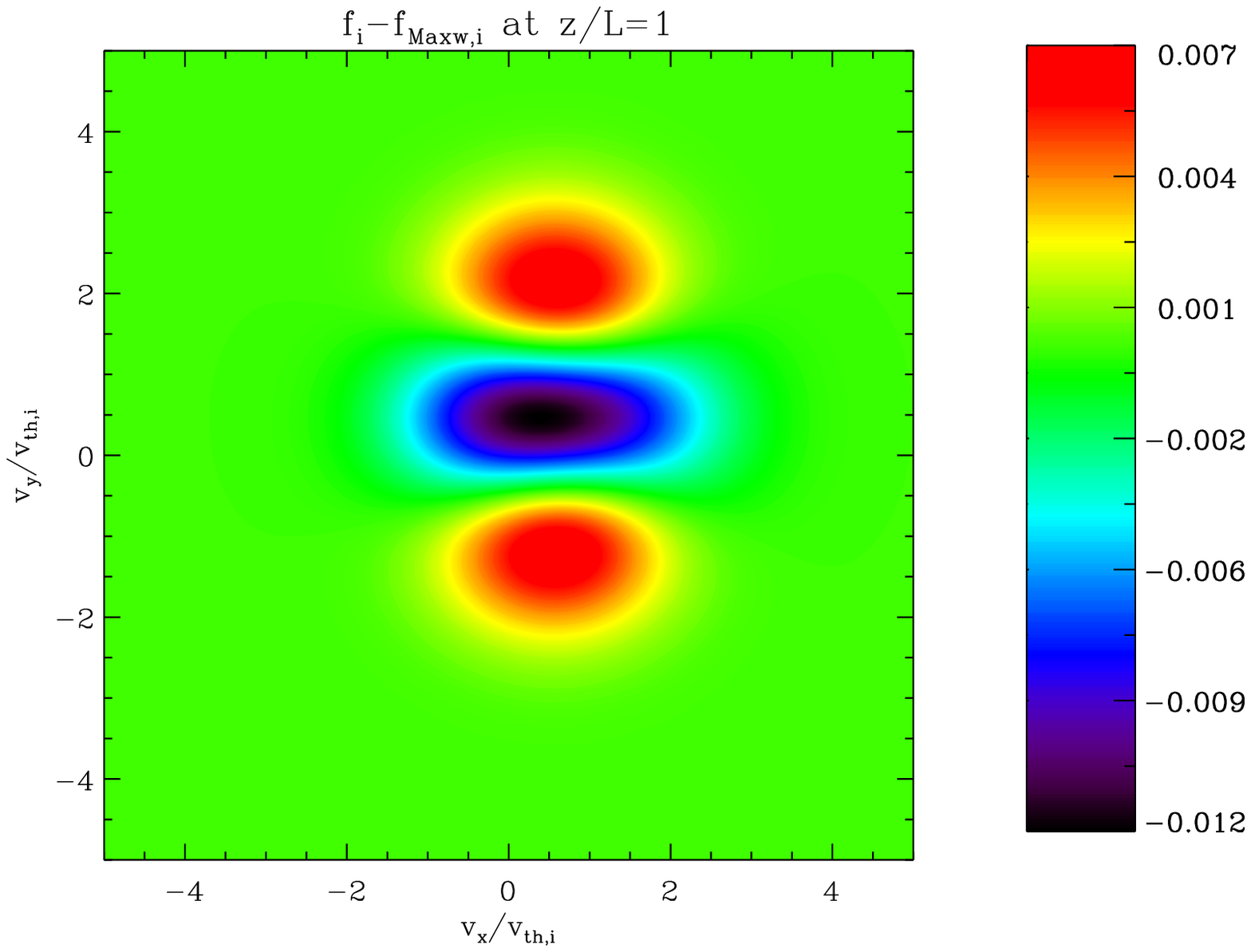}
 \caption{\label{fig:1e}}
\end{subfigure}
\caption{Contour plots of $f_i-f_{Maxw,i}$ for $z/L=-1$ (\ref{fig:1a}),  $z/L=-0.5$ (\ref{fig:1b}), $z/L=0$ (\ref{fig:1c}), $z/L=0.5$ (\ref{fig:1d}) and $z/L=1$ (\ref{fig:1e}). $\beta_{pl}=0.05$ and $\delta_i=0.03$. }
 \end{figure}

\begin{figure}
\centering
\begin{subfigure}[b]{0.48\textwidth}
\includegraphics[width=\textwidth]{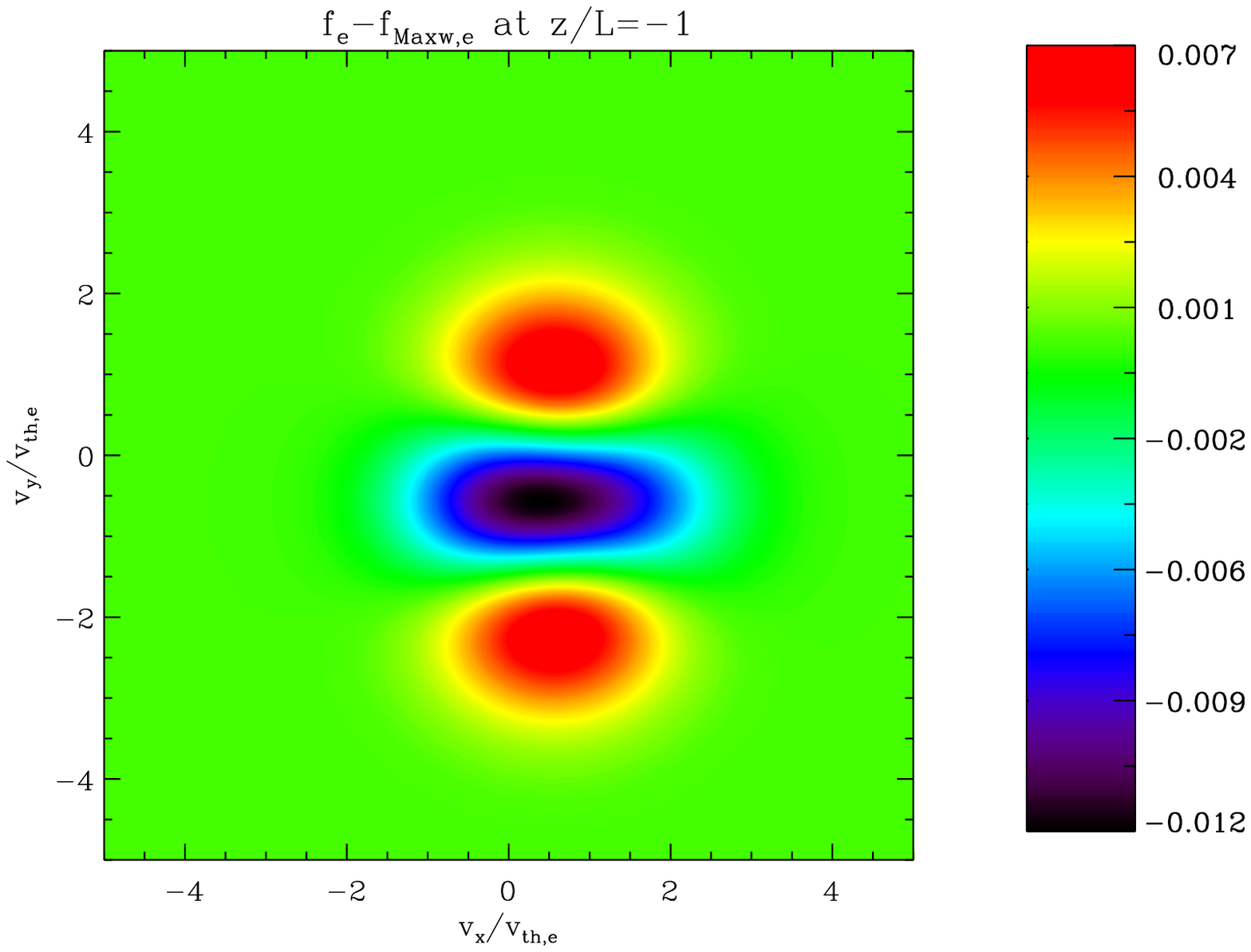}
 \caption{\label{fig:2a}}
\end{subfigure}
\begin{subfigure}[b]{0.48\textwidth}
\includegraphics[width=\textwidth]{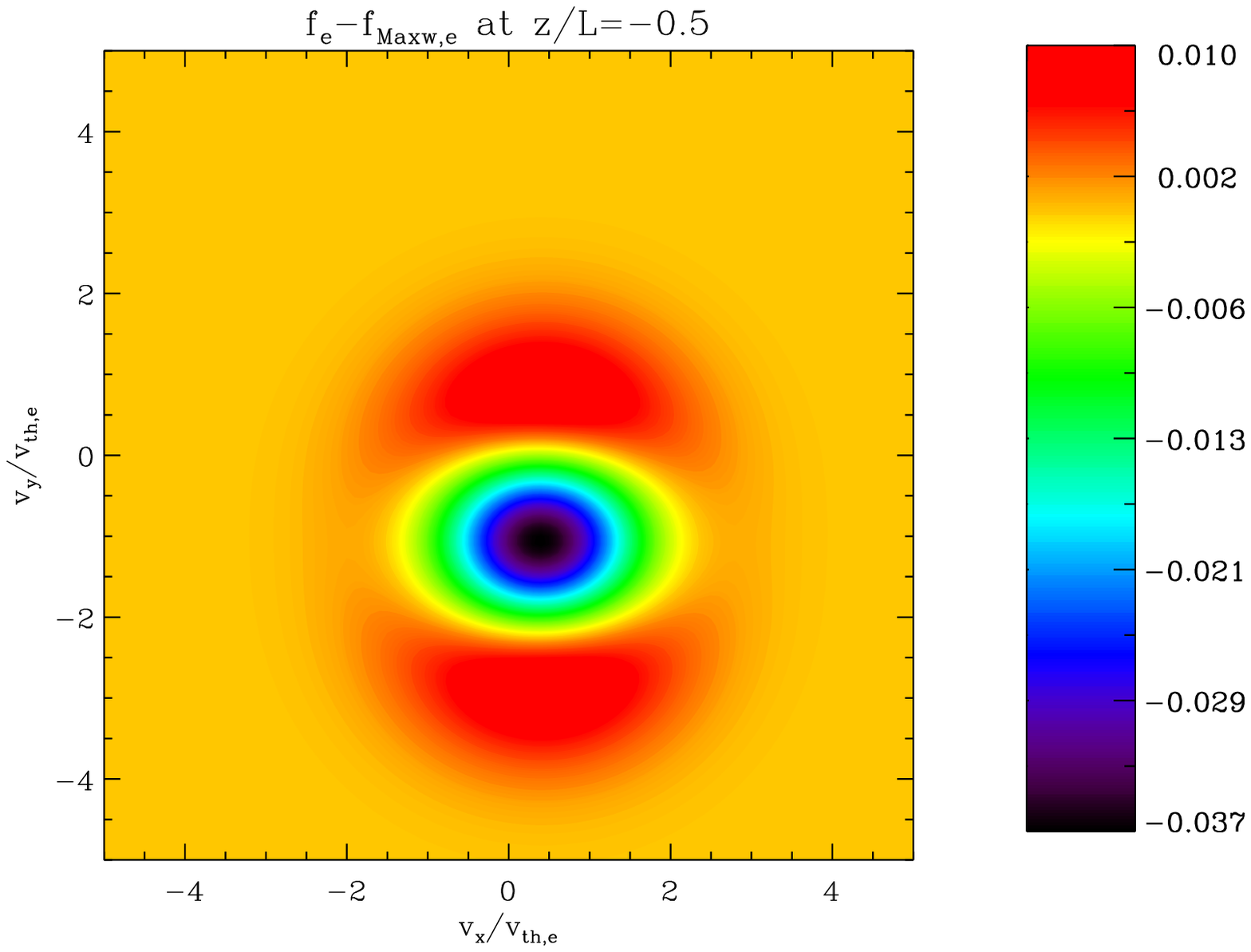}
 \caption{\label{fig:2b}}
\end{subfigure}
\begin{subfigure}[b]{0.48\textwidth}
\includegraphics[width=\textwidth]{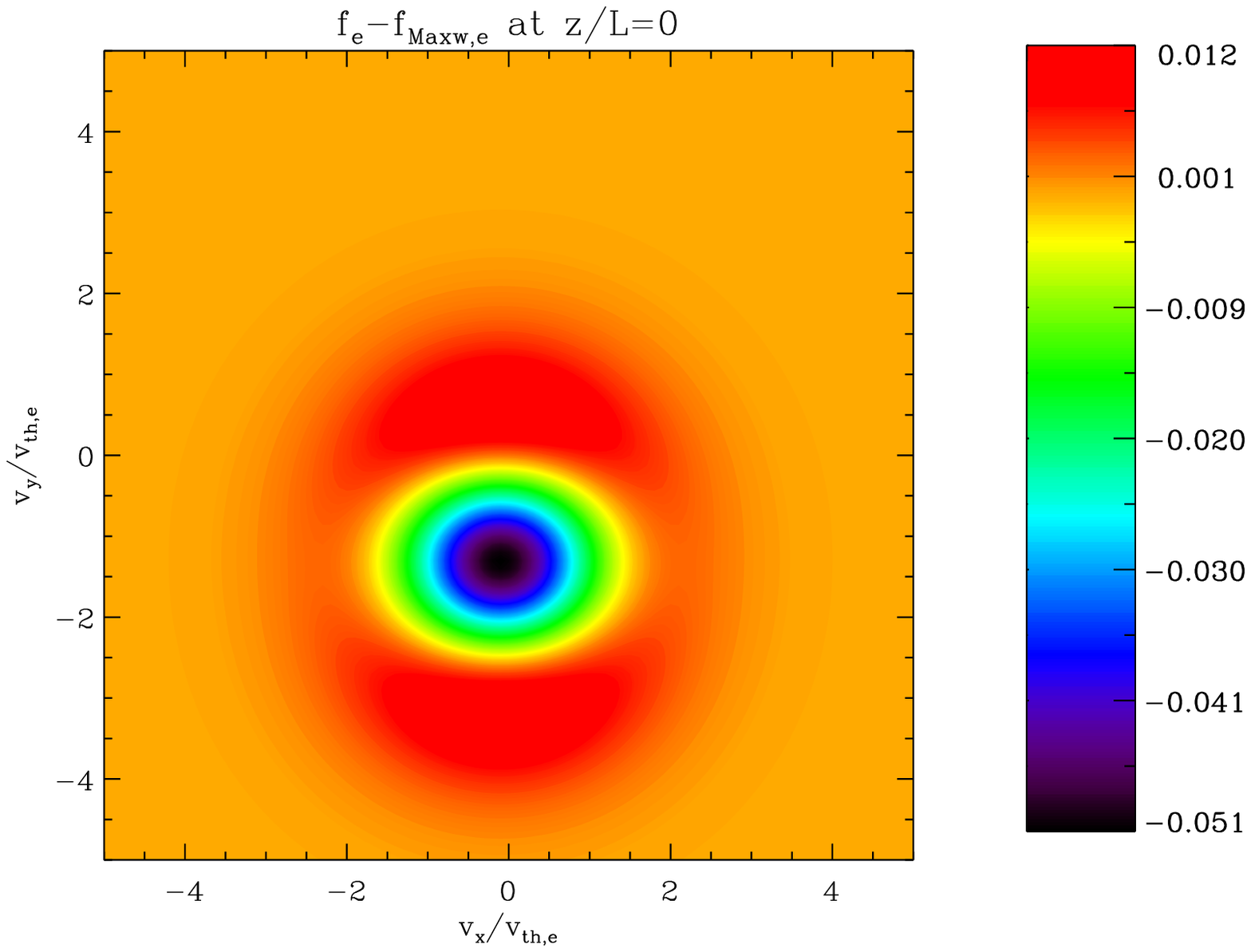}
 \caption{\label{fig:2c}}
\end{subfigure}
\begin{subfigure}[b]{0.48\textwidth}
\includegraphics[width=\textwidth]{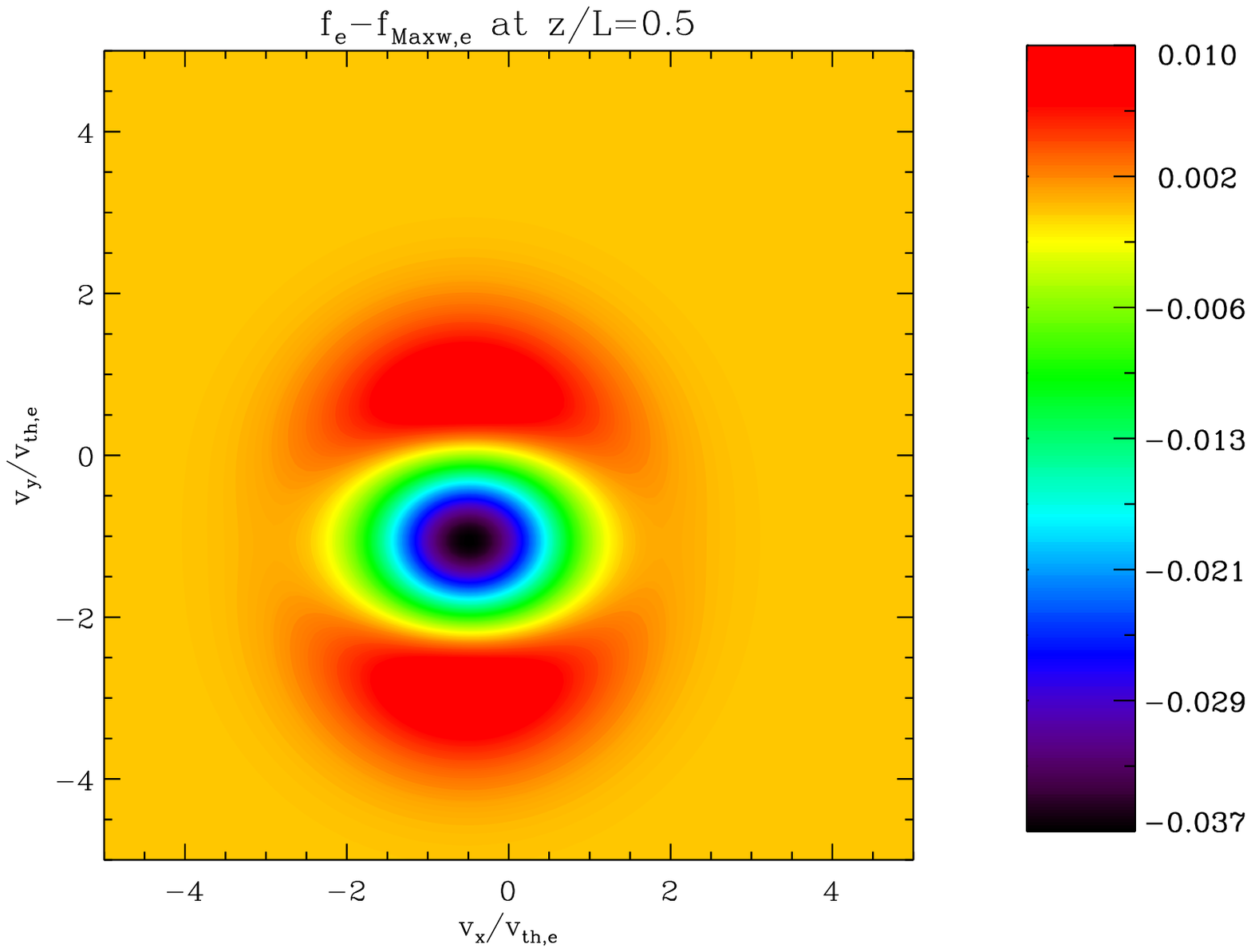}
 \caption{\label{fig:2d}}
\end{subfigure}
\begin{subfigure}[b]{0.48\textwidth}
\includegraphics[width=\textwidth]{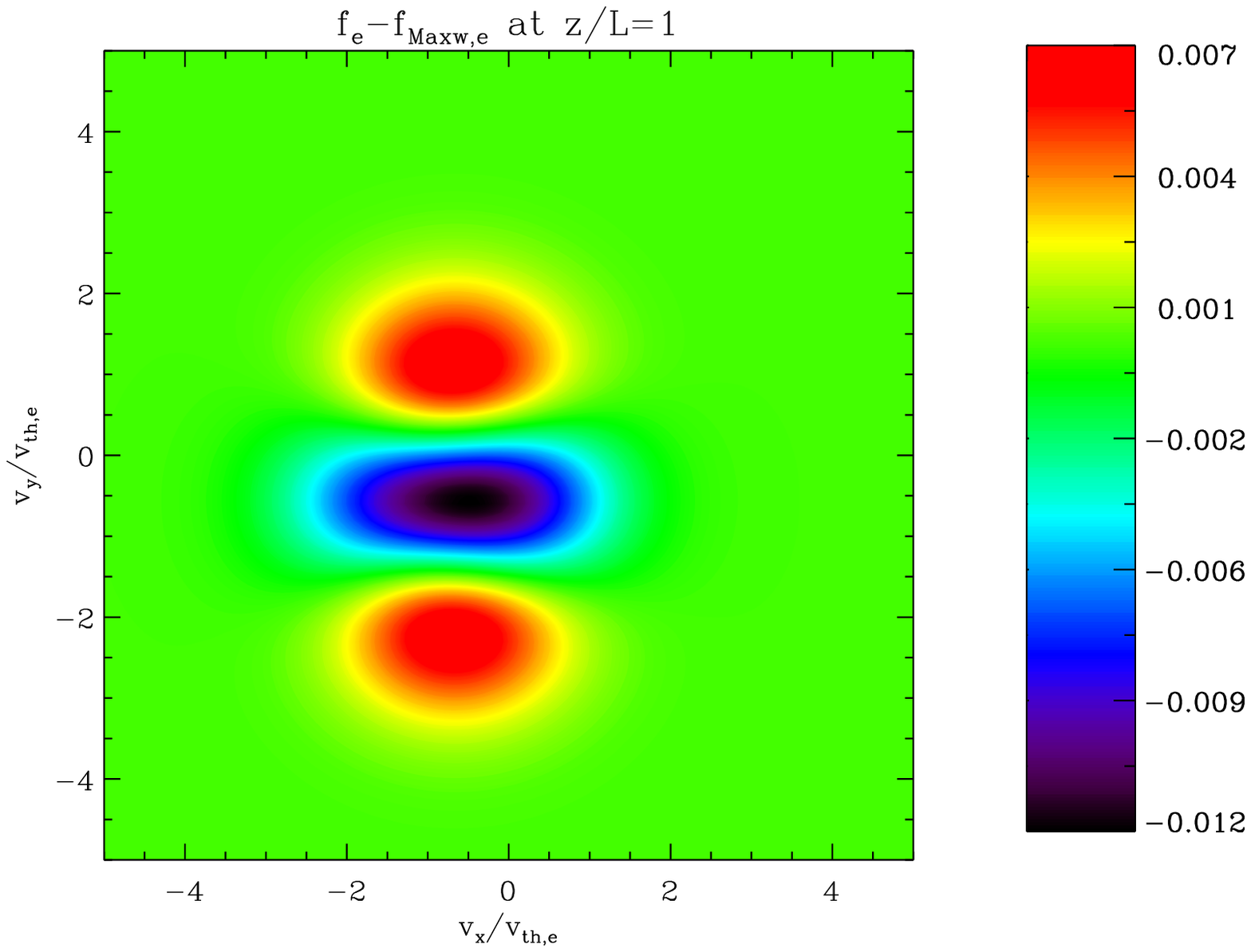}
 \caption{\label{fig:2e}}
\end{subfigure}
\caption{Contour plots of $f_e-f_{Maxw,e}$ for $z/L=-1$ (\ref{fig:2a}),  $z/L=-0.5$ (\ref{fig:2b}), $z/L=0$ (\ref{fig:2c}), $z/L=0.5$ (\ref{fig:2d}) and $z/L=1$ (\ref{fig:2e}). $\beta_{pl}=0.05$ and $\delta_e=0.03$. }
 \end{figure}

\begin{figure}
\centering
\begin{subfigure}[b]{0.48\textwidth}
\includegraphics[width=\textwidth]{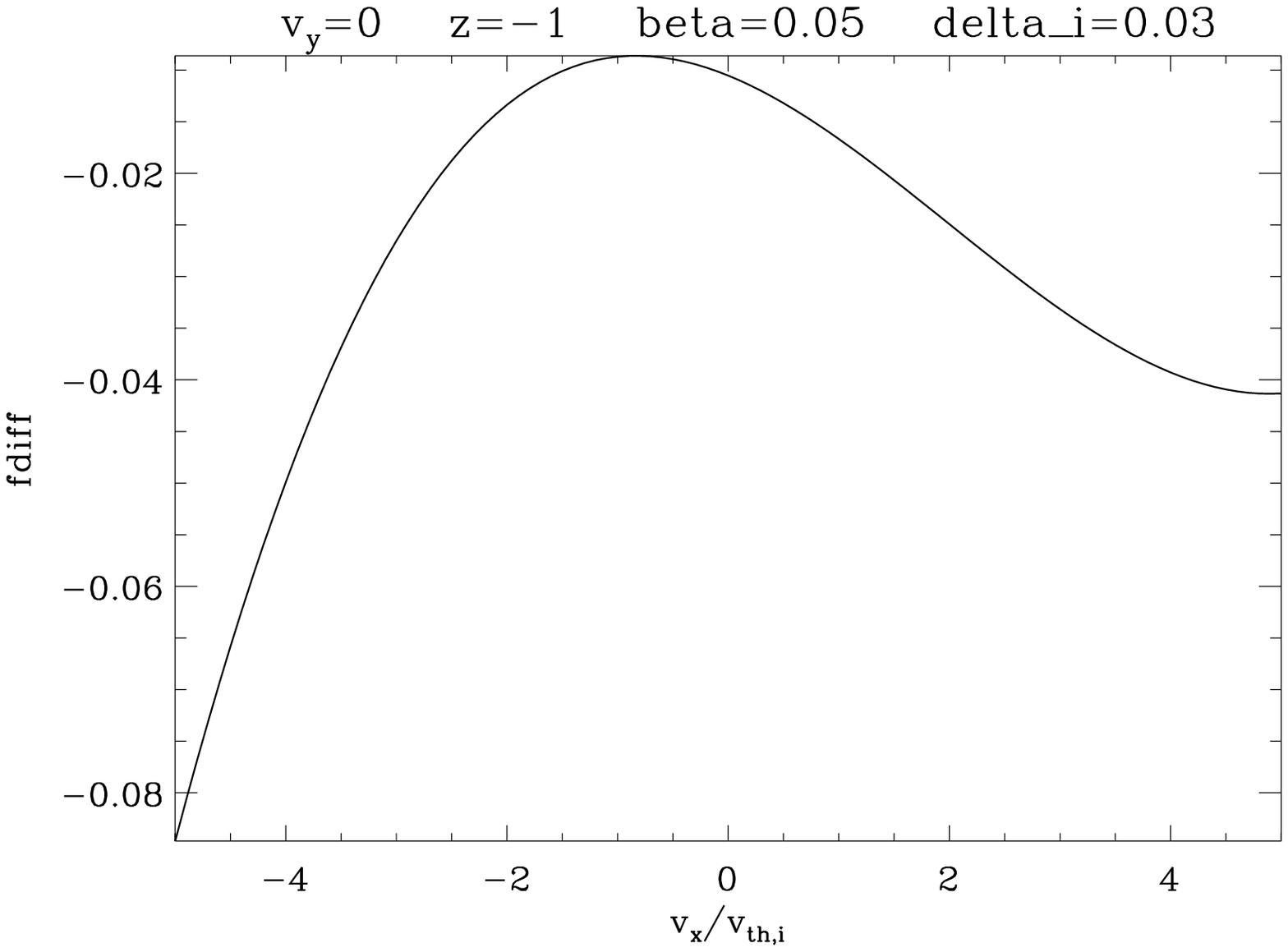}
 \caption{\label{fig:3a}}
\end{subfigure}
\begin{subfigure}[b]{0.48\textwidth}
\includegraphics[width=\textwidth]{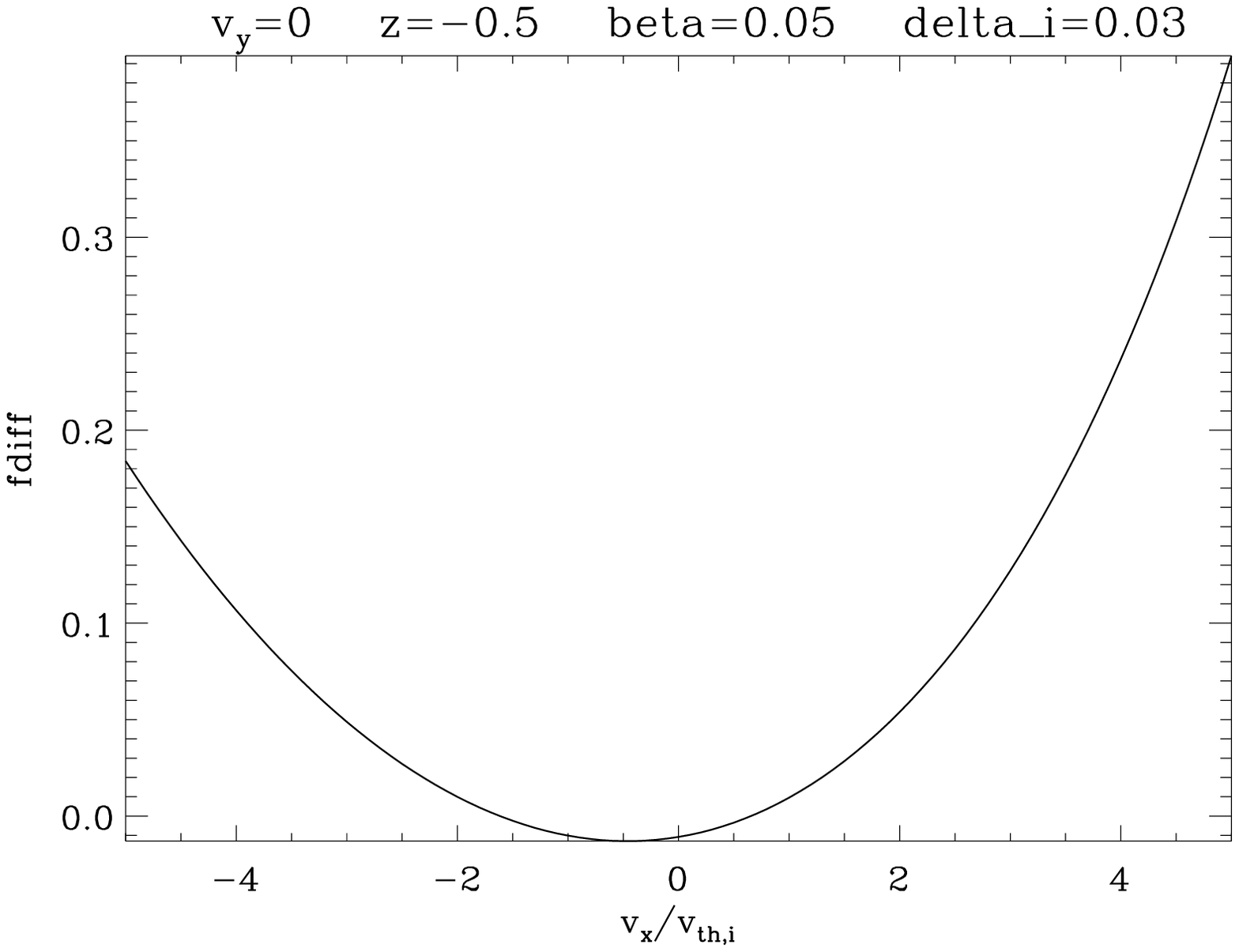}
 \caption{\label{fig:3b}}
\end{subfigure}
\begin{subfigure}[b]{0.48\textwidth}
\includegraphics[width=\textwidth]{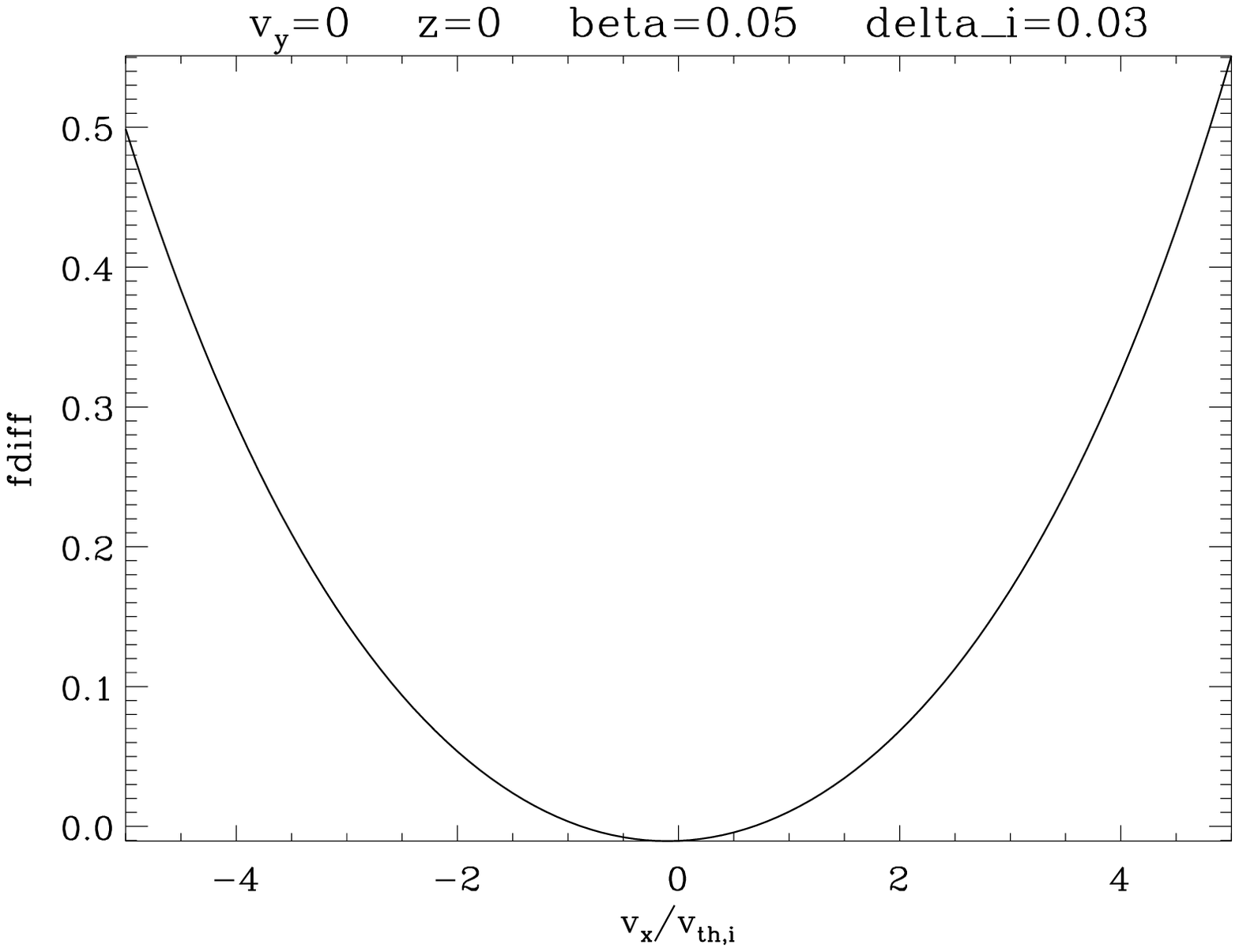}
 \caption{\label{fig:3c}}
\end{subfigure}
\begin{subfigure}[b]{0.48\textwidth}
\includegraphics[width=\textwidth]{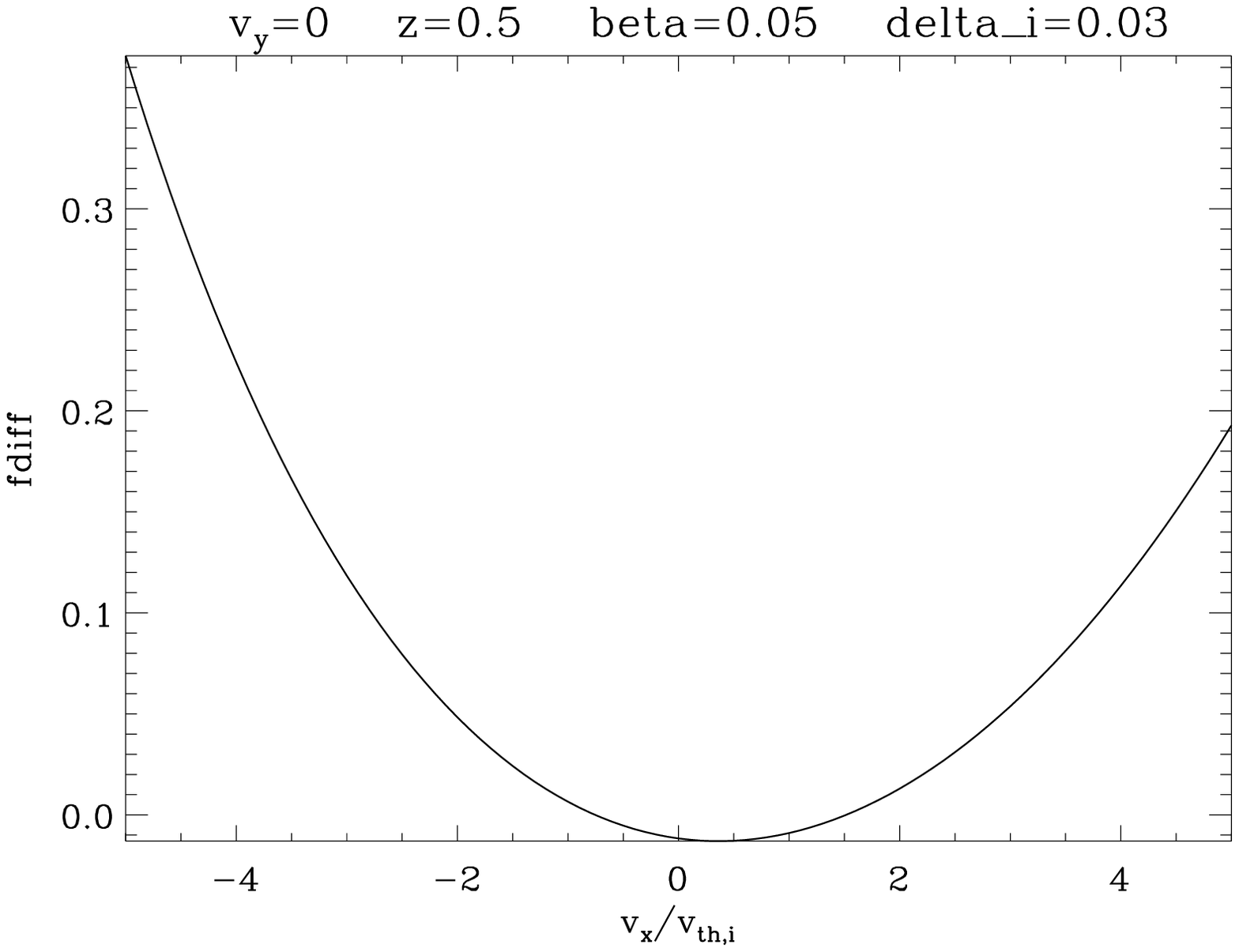}
 \caption{\label{fig:3d}}
\end{subfigure}
\begin{subfigure}[b]{0.48\textwidth}
\includegraphics[width=\textwidth]{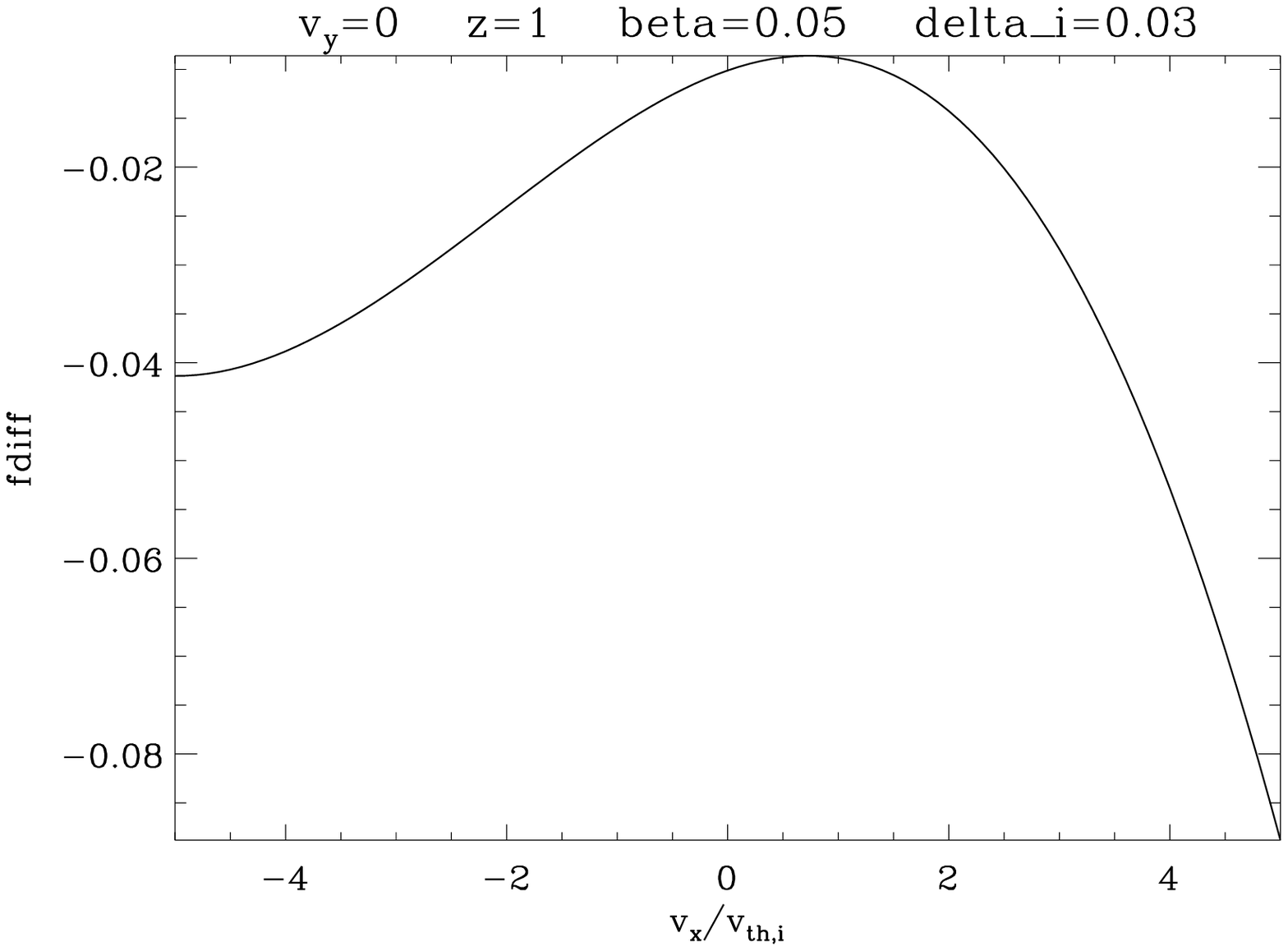}
 \caption{\label{fig:3e}}
\end{subfigure}
\caption{Line plots of $f_{diff,i}$ against $v_x/v_{th,i}$ at $v_y=0$ for $z/L=-1$ (\ref{fig:3a}),  $z/L=-0.5$ (\ref{fig:3b}), $z/L=0$ (\ref{fig:3c}), $z/L=0.5$ (\ref{fig:3d}) and $z/L=1$ (\ref{fig:3e}). $\beta_{pl}=0.05$ and $\delta_i=0.03$. }
 \end{figure}

\begin{figure}
\centering
\begin{subfigure}[b]{0.48\textwidth}
\includegraphics[width=\textwidth]{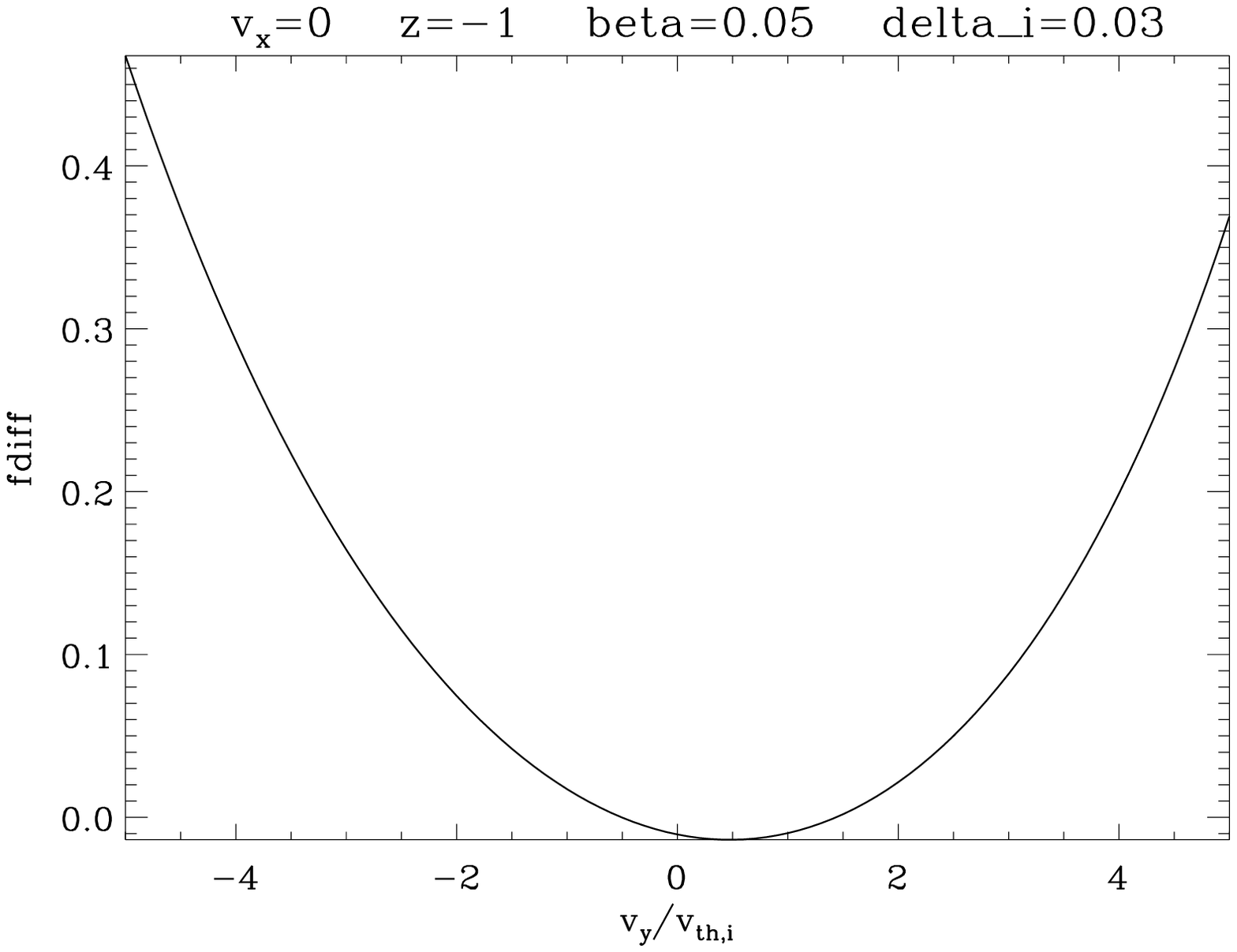}
 \caption{\label{fig:4a}}
\end{subfigure}
\begin{subfigure}[b]{0.48\textwidth}
\includegraphics[width=\textwidth]{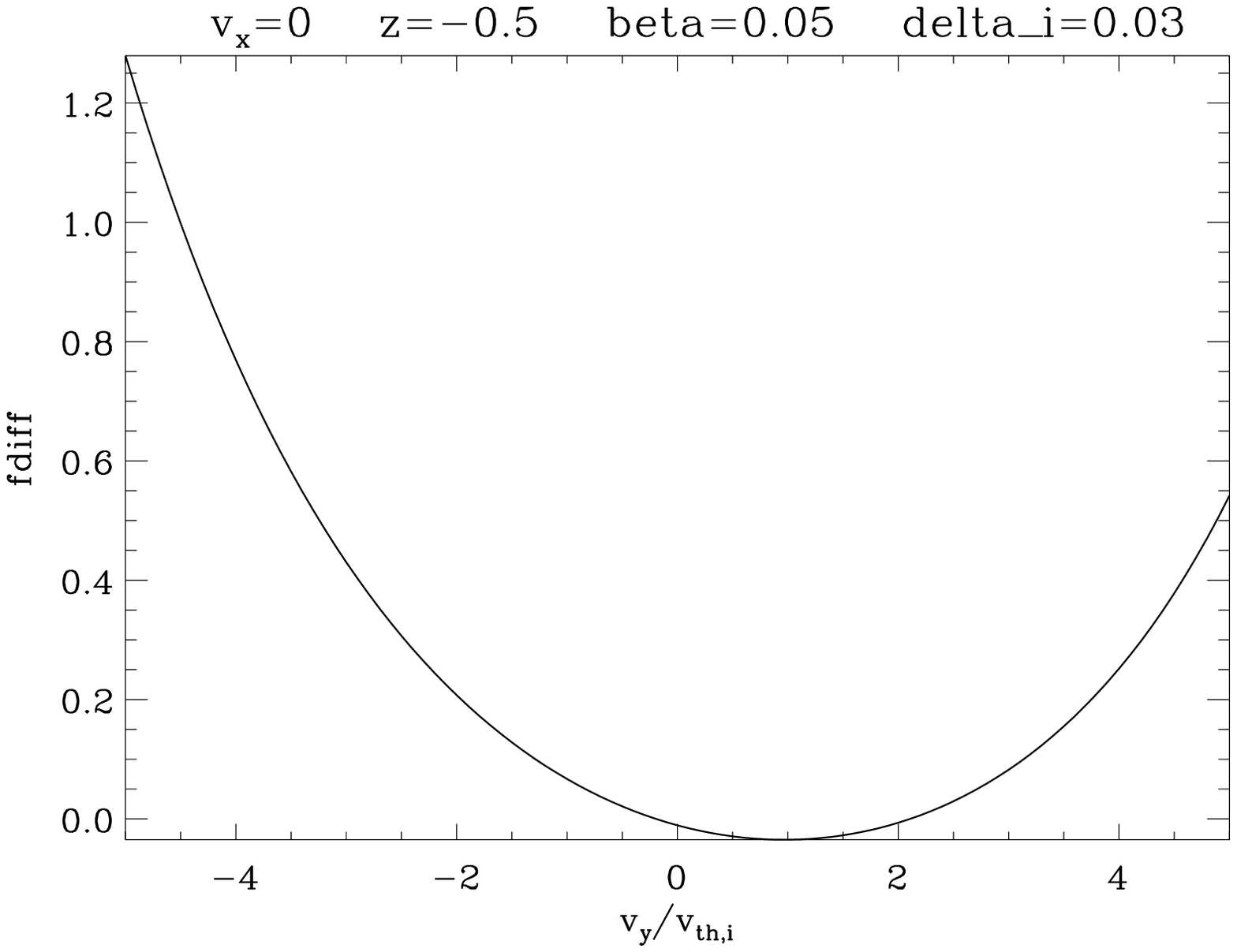}
 \caption{\label{fig:4b}}
\end{subfigure}
\begin{subfigure}[b]{0.48\textwidth}
\includegraphics[width=\textwidth]{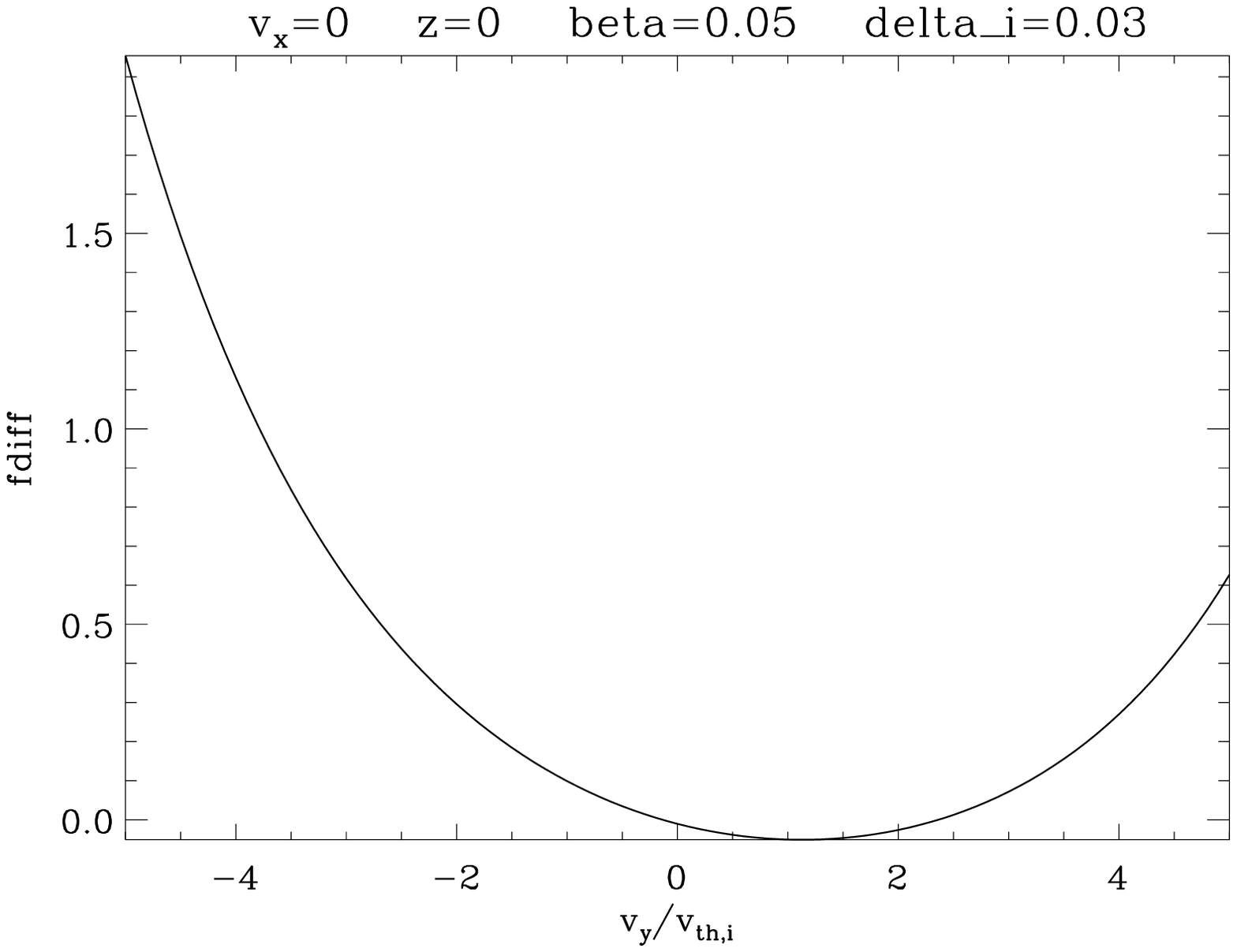}
 \caption{\label{fig:4c}}
\end{subfigure}
\begin{subfigure}[b]{0.48\textwidth}
\includegraphics[width=\textwidth]{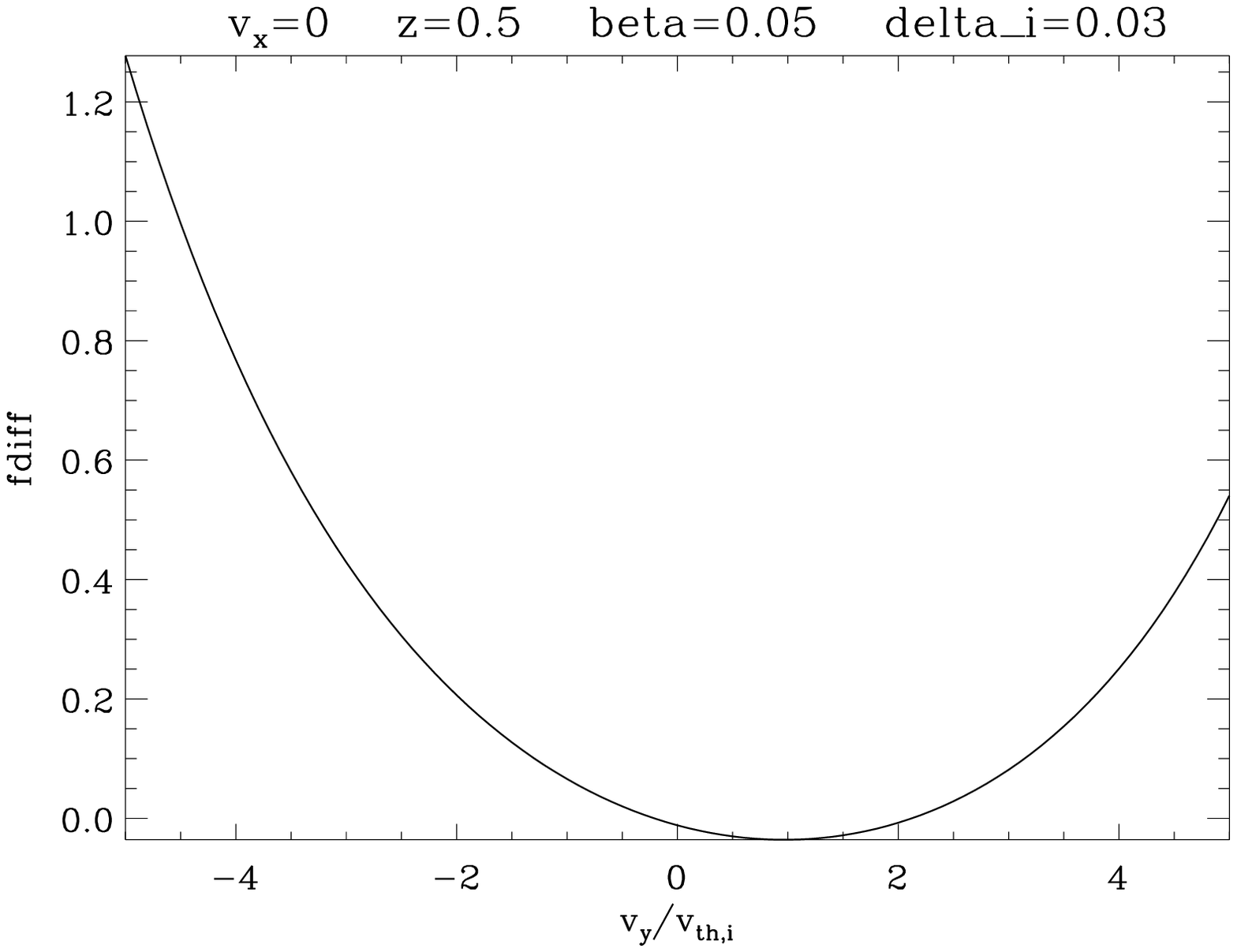}
 \caption{\label{fig:4d}}
\end{subfigure}
\begin{subfigure}[b]{0.48\textwidth}
\includegraphics[width=\textwidth]{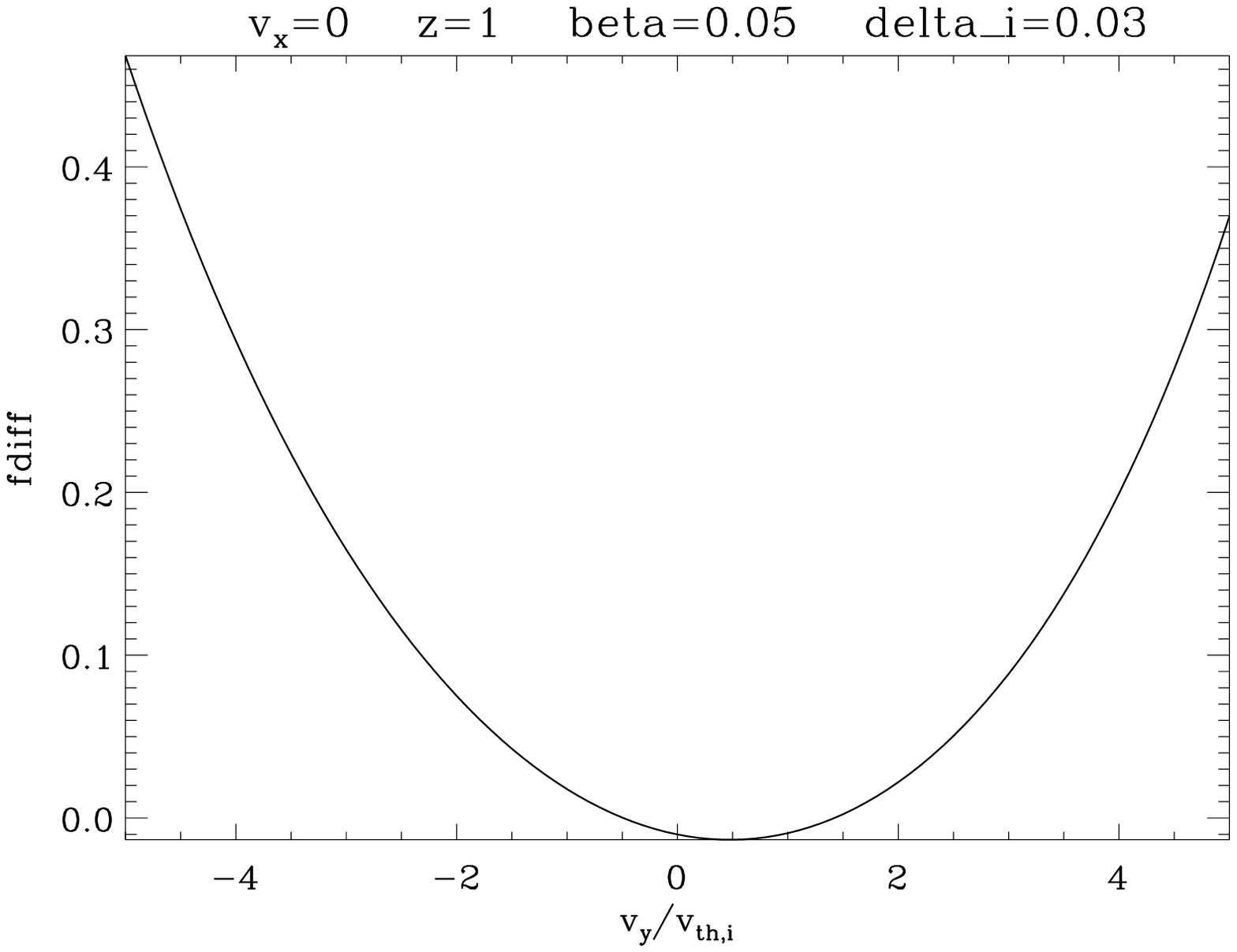}
 \caption{\label{fig:4e}}
\end{subfigure}
\caption{Line plots of $f_{diff,i}$ against $v_y/v_{th,i}$ at $v_x=0$ for $z/L=-1$ (\ref{fig:4a}),  $z/L=-0.5$ (\ref{fig:4b}), $z/L=0$ (\ref{fig:4c}), $z/L=0.5$ (\ref{fig:4d}) and $z/L=1$ (\ref{fig:4e}). $\beta_{pl}=0.05$ and $\delta_i=0.03$. }
 \end{figure}

\end{document}